\documentclass[runningheads,envcountsame]{llncs}
\usepackage{amssymb}
\usepackage{graphicx}
\usepackage{url}
\usepackage{color}
\usepackage{xspace}
\usepackage{mathtools}
\usepackage{enumerate}
\usepackage{algorithm}
\usepackage{algorithmicx}
\usepackage{tikz}
\usepackage[color=green!20]{todonotes}
\usepackage{comment}



\newcommand {\uec} {\mathrm{uec}}

\newcommand{\keywords}[1]{\par\addvspace\baselineskip
\noindent\keywordname\enspace\ignorespaces#1}






\usepackage{tikz}
\newcommand{\SV}[1]{#1}
\newcommand{\LV}[1]{}

\SV{\excludecomment{pf}}

\LV{
}

\usepackage{tikz}
\tikzstyle{vertex}=[circle, draw, inner sep=1pt, minimum width=0pt, minimum size=0cm]
\tikzstyle{vertex1}=[circle, draw, inner sep=2pt, minimum width=6pt, minimum size=0.5cm]
\usetikzlibrary{decorations,decorations.pathmorphing,decorations.pathreplacing,fit,positioning}

\title{Weighted Upper Edge Cover: Complexity and Approximability}

\author{Kaveh Khoshkhah\inst{1}\thanks{Supported by the Estonian Research Council, ETAG (Eesti Teadusagentuur), through PUT Exploratory Grant \#620.}\and Mehdi Khosravian Ghadikolaei\inst{2} \and J{\'e}r{\^o}me Monnot\inst{2} \and Florian Sikora\inst{2}
}
\institute{Institute of Computer Science, Tartu university, Estonia
\email{khoshkhah@theory.cs.ut.ee}
\and
Universit\'e Paris-Dauphine, PSL University, CNRS, LAMSADE, 75016 Paris, France
\email{mehdi.khosravian-ghadikolaei,jerome.monnot,florian.sikora@lamsade.dauphine.fr}
}

\titlerunning{Weighted Upper Edge Cover: Complexity and Approximability}
\authorrunning{Khoshkhah \emph{et al.}}
\date{}

\begin{document}
\maketitle
\vspace{-0.5 cm}
\begin{abstract}
Optimization problems consist of either maximizing or minimizing an objective function.
Instead of looking for a maximum solution (resp. minimum solution), one can find a minimum maximal solution (resp. maximum minimal solution).
Such ``flipping" of the objective function was done for many classical optimization problems.
For example, \textsc{Minimum Vertex Cover} becomes \textsc{Maximum Minimal Vertex Cover}, \textsc{Maximum Independent Set} becomes \textsc{Minimum Maximal Independent Set} and so on. In this paper, we propose to study the weighted version of {\em Maximum Minimal Edge Cover} called \textsc{Upper Edge Cover}, a problem having application in genomic sequence alignment.
It is well-known that \textsc{Minimum Edge Cover} is polynomial-time solvable and the "flipped" version is \textbf{NP}-hard, but constant approximable. We show that the weighted \textsc{Upper Edge Cover} is much more difficult
than \textsc{Upper Edge Cover} because it is not $O(\frac{1}{n^{1/2-\varepsilon}})$ approximable, nor $O(\frac{1}{\Delta^{1-\varepsilon}})$ in edge-weighted graphs of size $n$ and maximmm degree $\Delta$ respectively.
Indeed, we give some hardness of approximation results for some special restricted graph classes such as bipartite graphs, split graphs and $k$-trees.
We counter-balance these negative results by giving some positive approximation results in specific graph classes.
\end{abstract}

\vspace{-0.8 cm}
\keywords{Maximum Minimal Edge Cover, Graph optimization problem, Computational Complexity, Approximability.}

\vspace{-0.4 cm}
\section{Introduction}
\label{intro}

Considering a MaxMin or MinMax version of a problem by ``flipping'' the objective is not a new idea; in fact, such questions have been posed before for many classical optimisation problems.
Some of the most well-known examples include the \textsc{Minimum Maximal Independent Set} problem~\cite{BourgeoisCEP13}
(also known as \textsc{Minimum Independent Dominating Set}), the \textsc{Maximum Minimal Vertex Cover} problem \cite{BoriaCP15}, 
the \textsc{Minimum Maximal Matching} problem (also known as \textsc{Minimum Independent Edge Dominating Set})  \cite{yannakakis1980edge}, 
 and the  \textsc{Maximum Minimal Dominating Set} problem (also called \textsc{Upper Dominating Set}) \cite{AbouEishaHLMRZ16}.
 However, to the best of our knowledge, weighted MaxMin and MinMax versions have not been considered so far, except for \textsc{Minimum Independent Dominating Set} \cite{Chang04,LozinMMZ17},
and \textsc{weighted upper dominating set} problem~\cite{BoyaciM17}.
MaxMin or MinMax versions of classical problems turn out to be much harder than the originals, especially when one considers complexity and approximation.
For example, \textsc{Maximum Minimal Vertex Cover} does not admit any $n^{\frac{1}{2}-\epsilon}$ approximation~\cite{BoriaCP15}, while \textsc{Vertex Cover} admits a simple $2$-approximation.
\textsc{Minimum Maximal Matching} is \textbf{NP}-hard
(but $2$-approximable) while  \textsc{Maximum Matching} is polynomial.

The focus of this paper is on {\em edge cover}.
An  {\em edge cover} of a graph $G=(V,E)$ is a subset of edges $S\subseteq E$ which covers all vertices of $G$.
The {\em edge cover number} of $G=(V,E)$ 
is the minimum size of an {\em edge cover} of $G$. An optimal edge cover can be computed in polynomial time, 
even for the weighted version where a weight is given for each edge and one wants to minimize the sum of the weight of the edges in the solution (called here the {\em weighted edge cover number}).
An edge cover $S\subseteq E$ is {\em minimal} (with respect to inclusion) if the deletion of any subset of edges from $S$ destroys the covering property. Minimal edge cover is also known in the literature as an {\em enclaveless} set~\cite{Slater77} or as a {\em nonblocker} set~\cite{DehneFFPR06}.

In this paper, we study the computational complexity of the {\em weighted upper  edge cover number}, denoted here $\uec(G,w)$, that is the solution with maximum weight among all minimal edge covers. Formally, the associated optimization problem called the \textsc{Weighted Upper Edge Cover} problem asks to find the largest weighted minimal edge cover of an edge-weighted graph.

\vspace{-0.1 cm}
\begin{center}
\fbox{\begin{minipage}{1\textwidth}
\noindent{\textsc{Weighed Upper Edge Cover}}\\\nopagebreak
{\bf Input:} A  weighted connected graph  $G=(V,E,w)$, where $w(e)\geq 0$ for all $e\in E$. \\\nopagebreak
{\bf Solution:} Minimal edge cover $S\subseteq E$. \\\nopagebreak
{\bf Output:} Maximize $w(S)=\sum_{e\in S}w(e)$.
\end{minipage}}
\end{center}
\noindent
Hence, if $S^*$ is an optimal solution of \textsc{Weighed Upper Edge Cover} on $(G,w)$, then  $w(S^*)=\uec(G,w)$.
The unweighted value of the optimal solution is $uec(G)$ (denoted {\em upper  edge cover number}). To the best of our knowledge, the complexity of computing the weighted upper  edge cover number has never been studied in the literature, while a lot of results appear for the unweighed case
(corresponding to $w(e)=1$ for all $e\in E)$ \cite{NguyenSHSMZ08,DBLP:conf/mfcs/AthanassopoulosCKK09,ChenENRRS13,HeL13}.
The unweighted variant was firstly investigated in~\cite{manlove1999algorithmic}, where it is proven that the complexity of computing the upper  edge cover number is equivalent to solve the dominating set problem because $\uec(G)=|V|-\gamma(G)$ where $\gamma(G)$ is the size of minimum dominating set of graph $G$.
We will consider the implications of this important remark afterwards in the paper.

We will now define a related problem useful in the following because it is proved in \cite{manlove1999algorithmic} that $S\subseteq E$ is a minimal edge cover of $G=(V,E)$ iff $S$ is a spanning star forest of $G$ \textit{without trivial stars} (i.e. without stars consisting of a single vertex).

\begin{center}
\fbox{\begin{minipage}{1\textwidth}
\noindent{\textsc{Maximum Weighted Spanning Star Forest problem} (\textsc{MaxWSSF} in short)}\\\nopagebreak
{\bf Input:} An edge-weighted graph $(G,w)$ on $n$ vertices where $G=(V,E)$ and $w(e)\geq 0$ for all $e\in E$. \\\nopagebreak
{\bf Solution:} Spanning star forest  $\mathcal{S}=\{S_1,\dots,S_p\}\subseteq E$. \\\nopagebreak
{\bf Output:} maximizing $w(\mathcal{S})=\sum_{e\in \mathcal{S}}w(e)=\sum_{i=1}^p\sum_{e\in S_i}w(e)$.
\end{minipage}}
\end{center}

Given an instance $(G,w)$ of \textsc{MaxWSSF}, $opt_{MaxWSSF}(G,w)$ denotes the value of an {\em optimal spanning star forest}. Authors of~\cite{NguyenSHSMZ08} describe in details how to apply \textsc{MaxWSSF} model to alignment of multiple genomic sequence, a critical task in comparative genomics. They also show that this approach is promising with real data.
In this model, taking weights into account is fundamental since it represents alignment score. Also, their model uses each edge of the spanning star forest to output the solution.
Therefore, having trivial star is probably undesirable, which enforces the motivation of studying \textsc{Weighed Upper Edge Cover}.

The unweighted version (corresponding to the case $w(e)=1$ for all edges $e$) is denoted by \textsc{MaxSSF}. In this case, the optimal value is $opt_{MaxSSF}(G)$.
For unweighted graphs without isolated vertices, we have $\uec(G)=opt_{MaxSSF}(G)$  since any  spanning star forest (with possible trivial stars) can be (polynomially) converted into a star spanning forest without trivial stars (i.e. a minimal edge cover) with same size \cite{manlove1999algorithmic}. Hence, these two problems are completely equivalent even from an approximation point of view.

Concerning edge-weighted graphs, the relationship between \textsc{Weighed Upper Edge Cover} and \textsc{MaxWSSF} is less obvious.
For instance, we only have the following inequality: $opt_{MaxWSSF}(G,w)\geq \uec(G,w)$ because any minimal edge cover is a particular spanning star forest.
However, the difference between these two values can be arbitrarily large as indicated in Figure~\ref{Fig:exm_wuec,wssf} (in the graph drawn in Figure~ \ref{Fig:exm_wuec,wssf}$.(b)$, $v_4$ is an isolated vertex when  $\varepsilon$ goes to Infinity).
This means that isolated vertices play an important role in feasible solutions.
 Given a spanning star forest $\mathcal{S}=\{S_1,\dots,S_r\}$ of $(G,w)$, we rename vertices such that there is some $p, 0 \leq p< r$ such that $S_i=\{v_i\}$ are trivial stars for all $1\leq i \leq p$ (if $p=0$, then there is no trivial stars), and $S_j$ are non-trivial stars whose $c_j$ is the center for all $j>p$ (if $S_j$ is a single edge, both endpoints are considered as possible centers). We define $\mathrm{Triv}=\{v_i\colon i\leq p\}$ as the set of isolated vertices of $(V,E(\mathcal{S}))$ where $E(\mathcal{S})=\cup_{j>p}^rS_j$; moreover, $V_l$ and $V_c$ are respectively the set of leaves and the set of centers of stars in $V\setminus \mathrm{Triv}$.
Finally, for $v\in V_l$, $e_v(\mathcal{S})=c^\prime v\in E(\mathcal{S})$ denotes the edge linking the center $c^\prime$ to the leaf $v$.

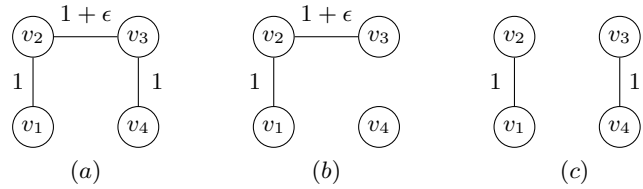
\begin{figure}
\centering
\begin{tikzpicture}[scale=1, transform shape]
\tikzstyle{vertex}=[circle, draw, inner sep=2pt,  minimum width=1 pt, minimum size=0.1cm]
\tikzstyle{vertex1}=[circle, draw, inner sep=2pt, fill=black!100, minimum width=1pt, minimum size=0.1cm]

\node[vertex] (v1) at (-1.2,-1) {$v_1$};
\node[vertex] (v2) at (-1.2,0.2) {$v_2$};
\node[vertex] (v3) at (0.2,0.2) {$v_3$};
\node[vertex] (v4) at (0.2,-1) {$v_4$};
\node () at (-0.5,-1.6) {$(a)$};

\draw (v1)--(v2)--(v3)--(v4);
\node () at (-1.4,-0.4) {$1$};
\node () at (-0.5,0.5) {$1+\epsilon$};
\node () at (0.45,-0.4) {$1$};

\node[vertex] (v11) at (2,-1) {$v_1$};
\node[vertex] (v12) at (2,0.2) {$v_2$};
\node[vertex] (v13) at (3.4,0.2) {$v_3$};
\node[vertex] (v14) at (3.4,-1) {$v_4$};
\node () at (2.7,-1.6) {$(b)$};

\draw (v11)--(v12);
\draw (v13)--(v12);
\node () at (1.75,-0.4) {$1$};
\node () at (2.7,0.5) {$1+\epsilon$};

\node[vertex] (v21) at (5.2,-1) {$v_1$};
\node[vertex] (v22) at (5.2,0.2) {$v_2$};
\node[vertex] (v23) at (6.6,0.2) {$v_3$};
\node[vertex] (v24) at (6.6,-1) {$v_4$};
\node () at (6.0,-1.6) {$(c)$};

\draw (v21)--(v22);
\draw (v23)--(v24);
\node () at (4.95,-0.4) {$1$};
\node () at (6.8,-0.4) {$1$};
\end{tikzpicture}
\caption{$(a):$ The weighted graph $G=(V,E,w)$. $(b):$ Optimal solution of \textsc{MaxWSSF}$(G,w)$. $(c):$ Optimal solution of \textsc{Weighted Upper Edge Cover} for $G$  with value $\uec(G,w)=2$.}\label{Fig:exm_wuec,wssf}
\end{figure}
We mainly focus on specific solutions of \textsc{MaxWSSF} called {\em  nice spanning star forests} defined as follows:

\begin{definition}\label{def:niceSSF}
$\mathcal{S}$ is a nice spanning star forest of $(G,w)$ if
$\mathrm{Triv}=\{v_i\colon i\leq p\}$ is an independent set in $G$ and  all edges of $G$ starting at $\mathrm{Triv}$ are linked to leaves of some $\ell$-stars of $\mathcal{S}$ with $\ell\geq 2$. Moreover, $w(uv)\leq w(e_v(\mathcal{S}))$ for $u\in \mathrm{Triv}$, $v\in V_l$.
\end{definition}
\begin{property}\label{niceSSF}
Any spanning star forest of $(G,w)$ can be polynomially converted into a nice one with at least the same weight.
\end{property}

\begin{proof}
The weights of $(G,w)$ are non-negative. Thus, if $\mathrm{Triv}$ is not an an independent set or if some vertex of $\mathrm{Triv}$ is linked to some center of
$\mathcal{S}$, we could obtain a better spanning star forest with less isolated vertices. In particular, it implies that no vertex of $\mathrm{Triv}$ is linked to a 1-star (i.e. a $K_2$ of $\mathcal{S}$). Finally, if $w(uv)> w(e_v(\mathcal{S}))$, then $\mathcal{S}^\prime=(\mathcal{S}\setminus \{e_v(\mathcal{S})\})\cup \{uv\}$ is a better spanning star forest.
\qed
\end{proof}

It is well known that optimization problems are easier to approximate when the input is a complete weighted graphs satisfying the {\em triangle inequality}, like for example in the traveling salesman problem.
Here, we introduce a generalization of this notion which works to any class of graphs.

\begin{definition}\label{def:cycle-ineq}
An edge weighted graph $(G,w)$ where $G=(V,E)$ satisfies the {\em cycle  inequality}, if for every cycle $C$, we have:
$$
\forall e\in C, ~~2w(e)\leq w(C)=\sum_{e'\in C}w(e')
$$
\end{definition}
Clearly, for complete graphs, cycle and triangle inequality notions coincide.
Definition~\ref{def:cycle-ineq} is interesting when focusing on classes of graphs
like split graphs or $k$-trees. In this article, we are also interested in {\em bivaluate weights} (resp., {\em trivalued})
corresponding to the case $w(e)\in \{a,b\}$ with $0\leq a<b$ (resp., $w(e)\in \{a,b,c\}$ where $0\leq a<b<c$ are 3 reals).
The particular case $a=0$ and $b=1$ (called here {\em binary weights})  is interesting by itself because
\textsc{MaxWSSF}  with binary weights exactly corresponds to  \textsc{MaxSSF} and has been extensively studied in the literature.
Moreover for instance, binary weighted \textsc{Minimum Independent Dominating Set} for chordal graphs has been studied
in \cite{Farber82ORL},  where it is shown that this restriction is polynomial, but bivalued weighted \textsc{Minimum Independent Dominating Set}  for chordal graphs with $a>0$ is \textbf{NP}-hard \cite{Chang04}.

\textbf{\textit{Graph terminology and definitions:}}
Throughout this paper, we consider edge-weighed undirected connected graphs $G=(V,E)$ on $n=|V|$ vertices and $m=|E|$ edges. Each edge $e=uv\in E$ between vertices $u$ and $v$ is weighted by a non-negative weight $w(e)\geq 0$; $K_n$ denotes the {\em complete graph} on $n$ vertices; a {\em bipartite graph} (resp., {\em split graph}) $G=(L\cup R,E)$ is a graph where the vertex set $L\cup R$ is decomposable into an independent set (resp., a clique) $L$ and an independent set $R$. A {\em $k$-tree} is a graph which can be formed by starting from a $k$-clique and then repeatedly adding vertices in such a way that each added vertex has exactly $k$ neighbors completely connected together (this neighborhood is a $k$-clique). For instance, $1$-trees are trees and $2$-trees are maximal series-parallel graphs. A graph is a {\em partial $k$-trees} (or equivalently with {\em treewidth} at most $k$) if it is a subgraph  of a $k$-trees. The {\em degree} $d_G(v)$ of vertex $v\in V$ in $G$ is the number of edges incident to $v$ and $\Delta(G)$ is the {\em maximum degree} of the graph $G$. A {\em star} $S\subseteq E$ of a graph $G=(V,E)$ is a tree of $G$ where at most one vertex has a degree greater than 1, or, equivalently, it is isomorphic to $K_{1,\ell}$ for some $\ell\geq 0$. The vertices of degree 1 (except the center when $\ell\leq 1$) are called {\em leaves} of the star while the remaining vertex is called {\em center} of the star. A $\ell$-star is a star of $\ell$ leaves. If $\ell=0$, the star is called {\em trivial} and it is reduced to a single vertex (the center); otherwise, the star is said {\em non-trivial}.
A {\em  spanning star forest} $\mathcal{S}=\{S_1,\dots,S_p\}\subseteq E$ of $G$ is a spanning forest into stars, that is, each $S_i$ is a star (possibly trivial), $V(S_i)\cap V(S_j)=\emptyset$ and $\cup_{i=1}^p V(S_i)=V$.
An {\em independent set} $S\subseteq V$ of a graph $G=(V,E)$ is a subset of vertices pairwise non-adjacent.
The \textbf{NP}-hard problem \textsc{MaxIS} seeks an independent set of maximum size.
The value of an optimal independent set of $G$ is denoted $\alpha(G)$.
A {\em matching} $M\subseteq E$ is a subset of pairwise non-adjacent edges. A matching $M$ of $G$ is {\em perfect} if all vertices of $G$ are
covered by $M$.
A {\em dominating set} for a graph $G$ is a subset $D$ of $V$ such that every vertex not in $D$ is adjacent to at least one vertex of $D$. The {\em domination number} $\gamma(G)$ is the number of vertices in the smallest dominating set of $G$.

\textbf{\textit{Related work}:}
\textsc{Upper Edge Cover}  has been investigated intensively during the recent years for unweighed graphs, mainly using the terminologies of {\em spanning star forests} or {\em dominating sets}.
The {\em minimum dominating set problem} (denoted \textsc{MinDS}) seeks the smallest dominating set of $G$ of value $\gamma(G)$.
As indicated before, we have $\uec(G)=n-\gamma(G)$.
Thus, using the complexity results known on \textsc{MinDS},
we deduce that \textsc{Upper Edge Cover} is \textbf{NP}-hard in planar graphs of maximum degree 3 \cite{GJ79}, chordal graphs~\cite{BoothJ82} (even in {\em undirected path graphs}, the class of vertex intersection graphs of a collection of paths in a tree), bipartite graphs, split graphs \cite{Bertossi84} and  $k$-trees with arbitrary $k$ \cite{CorneilSKeil87}, and it is {\it polynomial} in $k$-trees with fixed $k$, convex bipartite graphs \cite{DamaschkeMK90}, strongly chordal graphs~\cite{Farber84}.
Concerning the approximability, an \textbf{APX}-hardness proof with explicit inapproximability bound 
and a combinatorial 0.6-approximation algorithm is proposed in \cite{NguyenSHSMZ08}.
Better algorithms with approximation ratio 0.71 and 0.803 are given respectively in \cite{ChenENRRS13} and \cite{DBLP:conf/mfcs/AthanassopoulosCKK09}.
For any $\varepsilon > 0$, \textsc{Upper Edge Cover} is hard to approximate within a factor of $\frac{259}{260}+\varepsilon$ unless \textbf{P}=\textbf{NP} \cite{NguyenSHSMZ08}.
It admits a \textbf{PTAS} in $k$-trees (with arbitrary $k$), although \textsc{Upper Edge Cover} remains \textbf{APX}-complete
on $c$-dense graphs \cite{HeL13} (a graph is called {\em $c$-dense} if it contains at least $c\frac{n^2}{2}$ edges). 

\noindent
In contrast, for edge weighted graphs with non-negative weights, no result for
\textsc{Weighed Upper Edge Cover} is known, although some results are given
for \textsc{Maximum Weighted Spanning Star forest problem}: a 0.5-approximation is given in \cite{NguyenSHSMZ08}
(which is the best ratio obtained so far) and polynomial-time algorithms for special classes of
graphs such as trees and cactus graphs are presented in \cite{NguyenSHSMZ08,Nguyen15}.
Negative approximation results are presented in \cite{NguyenSHSMZ08,ChakrabartyG10,ChenENRRS13}.
In particular, \textsc{MaxWSSF} is \textbf{NP}-hard to approximate within $\frac{10}{11}+\varepsilon$~\cite{ChakrabartyG10}. Two generalizations of \textsc{WSSF},  denoted \textsc{MinExtWSSF} and \textsc{MaxExtWSSF}, have been introduced very recently in \cite{KhoshkhahGMT17} where the goal consists in {\em extending} some partial stars into spanning star forests.  In this context, a partial feasible solution is given in advance
and the goal is to extend this partial solution. Formally, the problem is defined as follow:

\begin{center}
\fbox{\begin{minipage}{.98\textwidth}
\noindent{\textsc{Extended weighted spanning star forest problem} (\textsc{ExtWSSF} in short)}\\\nopagebreak
{\bf Input:} A weighted graph $(G,w)$  and a packing of stars $\mathcal{U}=\{U_1,\dots,U_r\}$ where $G=(V,E)$ and $w(e)\geq 0$ for $e\in E$. \\\nopagebreak
{\bf Solution:} Spanning star forest  $\mathcal{S}=\{S_1,\dots,S_p\}\subseteq E$ containing $\mathcal{U}$. \\\nopagebreak
{\bf Output:}  $w(\mathcal{S})=\sum_{e\in \mathcal{S}}w(e)=\sum_{i=1}^p\sum_{e\in S_i}w(e)$.
\end{minipage}}
\end{center}

In  \cite{KhoshkhahGMT17}, several results have been given for both  {\em minimization}  (\textsc{MinExtWSSF}) and {\em maximization} (\textsc{MaxExtWSSF}) versions of \textsc{ExtWSSF} (denoted \textsc{MinExtWSSF} and \textsc{MaxExtWSSF} respectively). Dealing with  the minimization version for complete graphs: a dichotomy result of the computational complexity is presented depending on parameter $c$ of the (extended) $c$-relaxed triangle inequality and an FPT algorithm is  given.  For the maximization version, a positive approximation of  $1/2$ and a negative approximation result of $\frac{7}{8}$ (even for binary weights) are proposed.

A subset of vertices $V'$ is called \textit{non-blocking} if every vertex in $V'$ has at least one neighbor in $V\setminus V'$.
Actually, \textit{non-blocking} is dual of dominating set and vice versa.
For a given graph $G=(V,E)$ and a positive integer $k$, the \textsc{Non-blocker} problem asks if there is a \textit{non-blocking} set $V'\subseteq V$ with $|V'|\geq k$.
Hence, for unweighted graphs, optimal value of \textit{non-blocking} number equals the upper edge cover number.
In \cite{DehneFFPR06} Dehne et al. propose a parameterized perspective of the \textsc{Non-blocker} problem.
They give a linear kernel and an \textbf{FPT} algorithm running in  time $\mathcal{O^{*}}(2.5154^{k})$.
They also give faster algorithms for planar and bipartite graphs.


\textbf{\textit{Contributions:}}
The paper is organized in the following way.
We first show in Section~\ref{sec: complete graphs} that \textsc{Weighted Upper Edge Cover} in complete graphs is equivalent for its approximation to \textsc{MaxWSSF} in general graphs.
Then, we study \textsc{Weighted Upper Edge Cover} for bipartite graphs, split graphs and $k$-trees respectively in Sections~\ref{sec: bipartite graphs},~\ref{sec: split graphs} and~\ref{sec: k-trees}.
Motivated by the above results mostly negative, we propose a constant approximation ratio algorithm in Section~\ref{sec:apx-MaxDegreeDelta} for \textsc{Weighted Upper Edge Cover} in bounded degree graphs.


\noindent Note that all results given in this paper are valid  if $G$ is isolated vertex free instead of connected. 

\section{Complete graphs}\label{sec: complete graphs}

In this section, we deal with edge-weighted complete graphs.
This case seems to be the simplest one because the equivalence between \textsc{Upper Edge Cover} and \textsc{MaxSSF} for the unweighted case proven in~\cite{manlove1999algorithmic} remains valid for the weighted case as proven in the following.

\begin{theorem}\label{equivP_complete}
\textsc{MaxWSSF} in general graphs is equivalent to approximate \textsc{Weighted Upper Edge Cover} in complete graphs.
\end{theorem}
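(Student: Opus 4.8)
The plan is to establish the equivalence in two directions, showing each problem reduces to the other in an approximation-preserving way. The key insight is that in a complete graph $K_n$ every spanning star forest can be turned into a minimal edge cover (a spanning star forest without trivial stars) because the extra vertices are always adjacent, so the weighted upper edge cover number of $K_n$ should be governed by the best spanning star forest of the underlying structure. First I would handle the easy direction: any instance of \textsc{Weighted Upper Edge Cover} on a complete graph is already an instance of \textsc{MaxWSSF} (on that same complete graph), and since $opt_{MaxWSSF}(G,w)\geq \uec(G,w)$ always holds by the remark in the excerpt, I need to show that on complete graphs these two optima in fact coincide, or differ by an approximation-preserving transformation. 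The natural tool here is Property~\ref{niceSSF}: take an optimal (or near-optimal) spanning star forest of $K_n$, convert it into a nice one of at least the same weight, and then argue that in a complete graph the trivial stars can be absorbed into existing non-trivial stars (since every vertex of $\mathrm{Triv}$ is adjacent to every leaf and center) without decreasing the weight, yielding a genuine minimal edge cover.

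The more substantial direction is to reduce \textsc{MaxWSSF} on a general graph $G=(V,E,w)$ to \textsc{Weighted Upper Edge Cover} on a complete graph while preserving approximation guarantees. The natural construction is to take the complete graph $K_n$ on the same vertex set $V$, keep the weight $w(e)$ on every edge $e\in E$, and assign weight $0$ (or a suitably small weight) to every non-edge of $G$. Then a minimal edge cover of $K_n$ can use the zero-weight edges ``for free'' to cover vertices that $G$ would leave isolated, so that the maximum-weight minimal edge cover of $(K_n,w')$ recovers exactly the maximum-weight spanning star forest of $(G,w)$: the non-edges contribute nothing to the objective, and any vertex that is a trivial star in the optimal star forest of $G$ can be attached via a zero-weight edge in $K_n$ to satisfy the covering requirement without changing the value. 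I would verify that $opt_{MaxWSSF}(G,w)=\uec(K_n,w')$ and that an approximate solution on one side maps to an approximate solution of the same ratio on the other side, using Property~\ref{niceSSF} again to clean up the solution on the complete-graph side.

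The main obstacle I anticipate is the careful bookkeeping around isolated/trivial stars and the role of the zero-weight filler edges: I must ensure that attaching a trivial star through a zero-weight edge does not accidentally destroy the minimality of the edge cover or create a leaf that ought to have been a center, and that the conversion leaves the weight unchanged rather than merely bounded. In particular, the figure in the excerpt warns that the gap between $opt_{MaxWSSF}$ and $\uec$ can be arbitrarily large in general graphs precisely because of isolated vertices, so the whole point of passing to a complete graph (where isolated vertices cannot occur in a feasible solution) is to collapse that gap; I would make this rigorous by showing that the zero-weight completion exactly compensates for the vertices that $G$ would isolate. Once the value equality and the two approximation-preserving maps are in place, the equivalence statement follows, since an $\alpha$-approximation algorithm for either problem yields an $\alpha$-approximation for the other.
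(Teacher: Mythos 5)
Your proposal is correct and follows essentially the same route as the paper: the zero-weight completion of $G$ to $K_n$ for the reduction from \textsc{MaxWSSF}, the observation that on a complete graph with non-negative weights trivial stars can always be absorbed into non-trivial ones (so that $\uec(K_n,w')=opt_{MaxWSSF}(G,w)$), and the trivial inclusion in the other direction. Your extra care about attaching isolated vertices without breaking minimality (attach them to \emph{centers}, as in the paper, not to leaves) is the only point where the paper is terser than you are, and your invocation of Property~\ref{niceSSF} is a reasonable way to make that step explicit.
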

\begin{proof}
We propose two approximation preserving reductions, one from \textsc{MaxWSSF} in general graphs to \textsc{Weighted Upper Edge Cover} in complete graphs and the other from \textsc{Weighted Upper Edge Cover} to \textsc{MaxWSSF} in complete graphs.

\noindent $\bullet$ Reduction from \textsc{MaxWSSF} to \textsc{Weighted Upper Edge Cover} in complete graphs.

Let $(G,w)$ be an instance of \textsc{MaxWSSF} where $G=(V,E)$ is a connected graph with $n$ vertices, edge-weighted using $w$. We build an instance $(K_n,w')$ of \textsc{Weighted Upper Edge Cover} where $K_n$ is an edge-weighted complete graph $(V,E(K_n))$ over $n$ vertices, edge-weighted with $w'$, such that $\forall u,v\in V$ with $u\neq v$, $w'(uv)=w(uv)$ if $uv\in E$ and $w'(uv)=0$ otherwise.\\

Let $S^\prime\subseteq E(K_n)$ be a minimal edge cover of \textsc{Weighted Upper Edge Cover} with weight $w'(S^\prime)$.
The restriction of $S^\prime$ to $G$ gives a star spanning forest (with eventually trivial stars) $\mathcal{S}$.
Obviously, by construction we have:

\begin{equation}\label{eq1:theoequivP_complete}
w(\mathcal{S})=w'(S^\prime)
\end{equation}

Thus, from equality (\ref{eq1:theoequivP_complete}) we deduce $opt_{MaxWSSF}(G,w)\geq \uec(K_n,w')$.
Conversely, let $\mathcal{S}^*$ be an optimal  star spanning forest of \textsc{MaxWSSF} with value $opt_{MaxWSSF}(G,w)$.
By adding some edges from the center of some stars to the isolated vertices of $\mathcal{S}^*$, we obtain a minimal edge cover of $K_n$ of at least same value.
Hence,  $ \uec(K_n,w')\geq opt_{MaxWSSF}(G,w)$.
We can deduce,

\begin{equation}\label{eq2:theoequivP_complete}
 \uec(K_n,w')= opt_{MaxWSSF}(G,w)
\end{equation}

From equalities (\ref{eq1:theoequivP_complete}) and (\ref{eq2:theoequivP_complete}), we can deduce that any $\rho$ approximation of \textsc{Weighted Upper Edge Cover} for $(K_n,w')$ can be polynomially converted into a $\rho$ approximation of \textsc{MaxWSSF} for $(G,w)$.

\bigskip

\noindent $\bullet$ Reduction from  \textsc{Weighted Upper Edge Cover}  to  \textsc{MaxWSSF} in complete graphs.

From an edge-weighted complete graph $(K_n,w)$ instance of \textsc{Weighted Upper Edge Cover}, we set $(G,w')=(K_n,w)$ as an instance of  \textsc{MaxWSSF}. Since the graph is complete, the weights are non-negative  and the goal is maximization, we can only consider star spanning forests without trivial stars, i.e. minimal edge covers.
Hence, \textsc{Weighted Upper Edge Cover} is as a subproblem of \textsc{MaxWSSF}, even from an approximation point of view.
\end{proof}
\qed

From Theorem~\ref{equivP_complete} and from known results on \textsc{MaxWSSF} given in~\cite{NguyenSHSMZ08,ChakrabartyG10}, we deduce the following:

\begin{corollary}\label{cor:complete}
In complete graphs, \textsc{Weighted Upper Edge Cover} is  $1/2$-approximable but not approximable within $\frac{10}{11}+\varepsilon$ unless \textbf{P}$=$\textbf{NP}.
 \end{corollary}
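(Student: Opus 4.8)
The plan is to derive Corollary~\ref{cor:complete} directly from Theorem~\ref{equivP_complete} together with the two cited results on \textsc{MaxWSSF}. Theorem~\ref{equivP_complete} establishes that \textsc{Weighted Upper Edge Cover} in complete graphs and \textsc{MaxWSSF} in general graphs are equivalent from the standpoint of \emph{approximation}: the two reductions in its proof are approximation-preserving, so a $\rho$-approximation for one problem transfers to a $\rho$-approximation for the other (and the negative direction transfers inapproximability thresholds as well). Thus every approximation statement known for \textsc{MaxWSSF} in general graphs immediately yields the corresponding statement for \textsc{Weighted Upper Edge Cover} in complete graphs.

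First I would invoke the positive result. It is stated in the \emph{Related work} section that \textsc{MaxWSSF} admits a $0.5$-approximation~\cite{NguyenSHSMZ08}, i.e.\ a $1/2$-approximation in general graphs. Applying the reduction from \textsc{Weighted Upper Edge Cover} in complete graphs \emph{to} \textsc{MaxWSSF} in complete graphs (the second reduction of Theorem~\ref{equivP_complete}, which observes that on a complete graph maximization never benefits from trivial stars, so feasible star spanning forests coincide with minimal edge covers), a $1/2$-approximate spanning star forest is itself a $1/2$-approximate minimal edge cover. Hence \textsc{Weighted Upper Edge Cover} in complete graphs is $1/2$-approximable.

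Next I would invoke the negative result. It is stated that \textsc{MaxWSSF} is \textbf{NP}-hard to approximate within $\frac{10}{11}+\varepsilon$~\cite{ChakrabartyG10}. Using the first reduction of Theorem~\ref{equivP_complete} (from \textsc{MaxWSSF} in general graphs to \textsc{Weighted Upper Edge Cover} in complete graphs, obtained by embedding $G$ into $K_n$ with weight $0$ on the non-edges, so that $\uec(K_n,w')=opt_{MaxWSSF}(G,w)$ by equation~(\ref{eq2:theoequivP_complete})), any $\rho$-approximation algorithm for \textsc{Weighted Upper Edge Cover} in complete graphs would yield a $\rho$-approximation for \textsc{MaxWSSF} in general graphs. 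Consequently, a $(\frac{10}{11}+\varepsilon)$-approximation for the former would contradict the hardness of the latter unless \textbf{P}$=$\textbf{NP}, giving the claimed inapproximability bound.

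Since this is purely a corollary assembled from an already-proved equivalence and two externally quoted bounds, there is no genuine obstacle: the only thing to be careful about is the \emph{orientation} of each reduction, namely that the positive transfer uses the reduction whose source is the problem we want to solve and whose target carries the known algorithm, while the negative transfer uses the reduction whose source carries the known hardness. Both orientations are supplied by Theorem~\ref{equivP_complete}, so the argument is a short bookkeeping step rather than a new proof.
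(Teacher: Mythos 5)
Your proposal is correct and follows exactly the route the paper takes: the paper derives this corollary in one line from Theorem~\ref{equivP_complete} combined with the $0.5$-approximation of \textsc{MaxWSSF} from~\cite{NguyenSHSMZ08} and the $\frac{10}{11}+\varepsilon$ hardness from~\cite{ChakrabartyG10}. Your careful attention to the orientation of the two reductions is exactly the right bookkeeping, and nothing further is needed.
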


\vspace{-0.4 cm}
\section{Bipartite graphs}\label{sec: bipartite graphs}
Let us now focus on bipartite graphs.
We prove that, even in bipartite graphs with binary weights, \textsc{Weighted Upper Edge Cover} is not $O(n^{\frac{1}{2}-\varepsilon})$ approximable unless \textbf{P} = \textbf{NP}.
Also, we show the problem is \textbf{APX}-complete even for bipartite graphs with fixed maximum degree $\Delta$.

\begin{theorem}\label{bip:Reduc_IS}
\textsc{Weighted Upper Edge Cover} in bipartite graphs with binary weights and cycle inequality is as hard \footnote{The reduction is actually a Strict-reduction and it is a particular A-reduction which preserves constant approximation.} as \textsc{MaxIS} in general graphs.
\end{theorem}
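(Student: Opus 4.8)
The plan is to build an approximation-preserving reduction from \MCI (or ordinary \textsc{MaxIS}) in general graphs to \textsc{Weighted Upper Edge Cover} on a bipartite instance with binary weights. Starting from a graph $H=(V_H,E_H)$ on $n$ vertices, I would create a bipartite graph $G=(L\cup R,E)$ whose left side $L$ contains one vertex per original vertex of $H$, and whose right side $R$ contains gadget vertices used to force coverage. Edges incident to the $L$-vertices that encode ``selecting a vertex into the independent set'' would receive weight $1$, while all auxiliary edges needed only to guarantee that every vertex is covered (so that a minimal edge cover exists at all) would receive weight $0$. Because a minimal edge cover is exactly a spanning star forest without trivial stars (by the result of \cite{manlove1999algorithmic} quoted above), maximizing the number of weight-$1$ edges in such a forest should correspond to maximizing the number of $L$-vertices that can simultaneously be leaves/centers of disjoint stars, and the star-disjointness condition is what encodes the independence constraint in $H$.

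The key technical step is choosing the right-hand gadgets so that (i) the graph stays bipartite, (ii) the cycle inequality of Definition~\ref{def:cycle-ineq} is satisfied by the binary weighting, and (iii) there is a clean bijection, or at least a value-preserving correspondence, between independent sets of $H$ and minimal edge covers of $G$ of a given weight. Concretely I would attach to each original vertex and/or each edge of $H$ a private degree-one ``pendant'' vertex on the opposite side, joined by a weight-$0$ edge; this guarantees that a spanning star forest with no trivial stars always exists (every vertex can be covered cheaply) without inflating the objective. The independence constraint must be enforced by forcing that two adjacent original vertices cannot both contribute their weight-$1$ edge to the same minimal edge cover, for example via a shared gadget vertex that can belong to only one star. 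I would then argue the two directions: from an independent set $I$ of size $k$ in $H$, construct a minimal edge cover using $k$ weight-$1$ edges plus weight-$0$ completion edges, giving $\uec(G,w)\ge k$; conversely, from a minimal edge cover of weight $k$ extract an independent set of size (at least) $k$ in $H$, using Property~\ref{niceSSF} to first normalize the solution into a nice spanning star forest so that the weight-$1$ edges are structurally controlled.

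The main obstacle I anticipate is the converse direction together with verifying the cycle inequality simultaneously with bipartiteness. For the cycle inequality, every cycle in a bipartite graph has even length at least $4$, so with binary weights I must ensure no cycle has a single weight-$1$ edge exceeding half the cycle weight; the safest design is to route all weight-$1$ edges so that either they lie in no short cycle, or every cycle through a weight-$1$ edge contains a second weight-$1$ edge. For the converse extraction, the difficulty is that a minimal edge cover may use weight-$1$ edges in a way that does not immediately read off as an independent set (e.g.\ a weight-$1$ edge could appear as a leaf edge of a star centered at a gadget vertex); here the normalization to a nice spanning star forest, and the fact that selecting incident weight-$1$ edges would force two adjacent encoding-vertices into the same star and violate the disjointness imposed by the shared gadget, should let me recover independence. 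Since the claimed reduction is a Strict-reduction (a special A-reduction preserving constant factors, as the footnote states), I would finish by checking that an exact correspondence $\uec(G,w)=\alpha(H)$ holds, so that any $\rho$-approximation for \textsc{Weighted Upper Edge Cover} on these bipartite binary-weighted instances yields a $\rho$-approximation for \textsc{MaxIS}, and the known non-approximability of \textsc{MaxIS} within $n^{1-\varepsilon}$ transfers to the claimed $O(n^{1/2-\varepsilon})$ bound once $n=|L\cup R|$ is compared with $|V_H|$.
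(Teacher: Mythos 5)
Your overall strategy is the same as the paper's: subdivide each edge of the \textsc{MaxIS} instance so that a shared per-edge gadget vertex forces at least one endpoint to ``pay'' for coverage, attach private pendant gadgets so a minimal edge cover always exists, give weight $1$ only to the per-vertex ``selection'' edges and weight $0$ to everything else, and establish the exact equality $\uec(H,w)=\alpha(G)$ so that the reduction is strict. The bipartiteness and cycle-inequality checks you describe also go through for that construction. However, there is a genuine gap at the one place you flag but do not resolve: you never specify a gadget that makes the weight-$1$ selection edge of a vertex \emph{optional}. With your stated design, the weight-$1$ edge is either (i) a pendant edge to a private degree-one vertex, in which case it is forced into \emph{every} minimal edge cover (a pendant vertex can only be covered by its unique edge), so every solution has weight $n$ and nothing is encoded; or (ii) an edge into a shared gadget, in which case it is not clear why adjacency in $G$ forbids both endpoints from collecting weight $1$. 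The normalization via Property~\ref{niceSSF} does not repair this, because the issue is structural in the construction, not in the shape of the solution.

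The paper's fix is the missing ingredient: attach to each original vertex $v_i$ a \emph{path of length two} $v_i$--$v_{i,1}$--$v_{i,2}$, put weight $1$ on the inner edge $v_iv_{i,1}$ and weight $0$ on the outer edge $v_{i,1}v_{i,2}$. Since $v_{i,2}$ is a leaf, the outer edge is forced into every minimal edge cover, so $v_{i,1}$ is always already covered; minimality then implies that the weight-$1$ edge $v_iv_{i,1}$ belongs to a minimal edge cover if and only if $v_i$ is covered by no other edge of the solution. Combined with the subdivision vertices $v_{ij}$ (each of which must be covered by $v_iv_{ij}$ or $v_jv_{ij}$, thereby covering $v_i$ or $v_j$ and disqualifying its weight-$1$ edge), this yields exactly that $\{v_i\colon v_iv_{i,1}\in S'\}$ is an independent set of weight $w(S')$, and conversely that any independent set lifts to a minimal edge cover of the same weight. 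You should make this two-edge pendant (and the placement of the weight $1$ on the \emph{inner} edge) explicit; once you do, the rest of your argument, including the size bound $|V_H|=3n+m\leq 2n^2$ used to transfer the $n^{1-\varepsilon}$ hardness of \textsc{MaxIS} to the $O(n^{1/2-\varepsilon})$ bound, is exactly the paper's.
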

\begin{proof}
We propose an approximation preserving \textbf{APX}-reduction from \textsc{Independent Set} (denoted \textsc{MaxIS}) to \textsc{Weighted Upper Edge Cover}.

\noindent Given a connected graph $G=(V,E)$ with $n$ vertices and $m$ edges where $V=\{v_1,\dots,v_n\}$, instance of \textsc{MaxIS}, we build a connected  bipartite edge-weighted graph $H=(V_H,E_H,w)$ as follows (see also Figure \ref{Fig:BipTheoReduc_IS}):
\begin{itemize}
\item For each $v_i\in V$, add a $P_3$ with edge set $\{v_iv_{i,1},v_{i,1}v_{i,2}\}$.
\item For each edge $e=v_iv_j\in E$ where $i<j$, add a middle vertex $v_{ij}$ on edge $e$.
\item $w(e): =\begin{cases}1 & \text{if }  e=v_iv_{i,1}$ for some $v_i\in V\\0 & \text{otherwise}.\end{cases}$
\end{itemize}

Clearly, $H$ is a connected bipartite graph on $|V_H|=3n+m$ vertices and $|E_H|=2(m+n)$ edges. Moreover,  weights are binary and instance satisfies cycle inequality.

\vspace{-0.5 cm}
\begin{figure}\label{fig:reduction_IS}
\centering
\begin{tikzpicture}[scale=0.9, transform shape]
\node[vertex1] (v1) at (0,0) {$v_1$};
\node[vertex1, right of=v1, node distance=2 cm] (v2){$v_2$};
\node[vertex1, below of=v2, node distance=2 cm] (v3){$v_3$};
\node[vertex1, below of=v1, node distance=2 cm] (v4){$v_4$};
\draw (v1)--(v2);
\draw (v2)--(v3)--(v4)--(v1)--(v3);

\node() at (1,-2.75) {$G=(V,E)$};
\node[vertex1] (v1H) at (5.5,0.2) {$v_1$};
\node[vertex, right of=v1H, node distance=1.2 cm] (v12H){$v_{12}$};
\node[vertex1, right of=v12H, node distance=1.2 cm] (v2H){$v_2$};
\node[vertex, below of=v2H, node distance=1.2 cm] (v23H){$v_{23}$};
\node[vertex1, below of=v23H, node distance=1.2 cm] (v3H){$v_3$};
\node[vertex, below of=v1H, node distance=1.2 cm] (v14H){$v_{14}$};
\node[vertex1, below of=v14H, node distance=1.2 cm] (v4H){$v_4$};
\node[vertex, right of=v4H, node distance=1.2 cm] (v34H){$v_{34}$};
\node[vertex, right of=v14H, node distance=1.2 cm] (v13H){$v_{13}$};
\node[vertex, left of=v1H, node distance=1 cm] (v1x){$v_{1,1}$};
\node[vertex, left of=v1x, node distance=1 cm] (v1y){$v_{1,2}$};
\node[vertex, right of=v2H, node distance=1 cm] (v2x){$v_{2,1}$};
\node[vertex, right of=v2x, node distance=1 cm] (v2y){$v_{2,2}$};
\node[vertex, left of=v4H, node distance=1 cm] (v4x){$v_{4,1}$};
\node[vertex, left of=v4x, node distance=1 cm] (v4y){$v_{4,2}$};
\node[vertex, right of=v3H, node distance=1 cm] (v3x){$v_{3,1}$};
\node[vertex, right of=v3x, node distance=1 cm] (v3y){$v_{3,2}$};
\draw (v1H)--(v12H)--(v2H)--(v23H)--(v3H)--(v34H)--(v4H)--(v14H)--(v1H)--(v13H)--(v3H);
\draw (v1H)--(v1x)--(v1y);
\draw (v2H)--(v2x)--(v2y);
\draw (v3H)--(v3x)--(v3y);
\draw (v4H)--(v4x)--(v4y);
\draw (v1H) -- node[midway,above] {$0$} (v12H);
\draw (v2H) -- node[midway,above] {$0$} (v12H);
\draw (v2H) -- node[midway,right] {$0$} (v23H);
\draw (v3H) -- node[midway,right] {$0$} (v23H);
\draw (v3H) -- node[midway,below] {$0$} (v34H);
\draw (v4H) -- node[midway,below] {$0$} (v34H);
\draw (v1H) -- node[midway,left] {$0$} (v14H);
\draw (v4H) -- node[midway,left] {$0$} (v14H);
\draw (v1H) -- node[midway,right] {$0$} (v13H);
\draw (v3H) -- node[midway,right] {$0$} (v13H);
\draw (v1H) -- node[midway,above] {$1$} (v1x);
\draw (v1x) -- node[midway,above] {$0$} (v1y);
\draw (v2H) -- node[midway,above] {$1$} (v2x);
\draw (v2x) -- node[midway,above] {$0$} (v2y);
\draw (v3H) -- node[midway,below] {$1$} (v3x);
\draw (v3x) -- node[midway,below] {$0$} (v3y);
\draw (v4H) -- node[midway,below] {$1$} (v4x);
\draw (v4x) -- node[midway,below] {$0$} (v4y);

\node() at (6.75,-2.75) {$H=(V_H,E_H,w)$};
\end{tikzpicture}
\caption{Construction of $H$ from $G$. The weights are indicated on edges.}\label{Fig:BipTheoReduc_IS}
\end{figure}
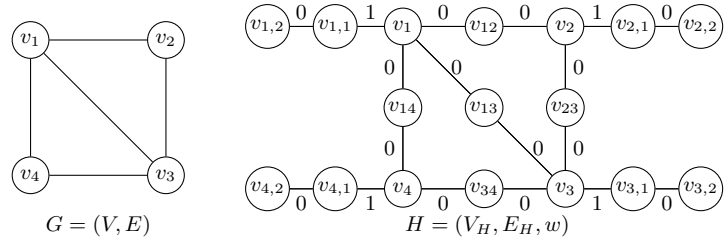
\noindent Let $S^*$ be a maximum independent set of $G$ with  size $\alpha(G)$. For each $e\in E$, let $v^e\in V\setminus S^*$ be a vertex which covers $e$; it is possible since $V\setminus S^*$ is a {\em vertex cover} of $G$. Moreover, $\{v^e:e\in E\}=V\setminus S^*$ since $S^*$ is a maximum independent set of $G$. Clearly, $S^\prime=\{v_{xy}v^e :e=xy\in E\}\cup \{v_{i,1}v_{i,2}:v_i\in V\}\cup \{v_iv_{i,1}:v_i\in S^*\}$ covers all vertices of $H$ and since it doesn't include any $P_3$, then $S^\prime$ is a minimal edge cover of $H$. By construction, $w(S^\prime)=|S^*|=\alpha(G)$. Hence, we deduce:
\begin{equation}\label{eq1:BipReduc_IS}
\uec(H,w)\geq \alpha(G)
\end{equation}

Conversely, suppose $S^\prime$ is a minimal edge cover  of $H$ with weight $w(S^\prime)$.
Let us make some simple observations of every minimal edge cover  of $H$.
Clearly, $\{v_{i1}v_{i2}:v_i\in V\}$ is part of every feasible solution because $v_{i2}$ for $v_i\in V$ are leaves of $H$.
Moreover, for each $e=v_iv_j\in E$ with $i<j$,  at least one edge between $v_iv_{ij}$ or $v_jv_{ij}$ belongs to any minimal edge cover of $H$. If $v_iv_{ij}\notin S'$, it implies that $v_jv_{j,1}\notin S'$ is not a part of the feasible solution because of minimality of $S'$. Hence, $S=\{v_i:v_iv_{i,1}\in S'\}$ is an independent set of $G$ with size $|S|=w(S^\prime)$. We deduce:

\begin{equation}\label{eq2:BipReduc_IS}
\alpha(G)\geq \uec(H,w)
\end{equation}

Using inequalities (\ref{eq1:BipReduc_IS}) and (\ref{eq2:BipReduc_IS}) we deduce:

\begin{equation}\label{eq3:BipReduc_IS}
\alpha(G)=\uec(H,w)
\end{equation}

In conclusion, for each minimal edge cover $S^\prime$  on $H$, there is an independent set $S$ of $G$ (computed in polynomial-time) such that $|S|\geq w(S^\prime)$.
\end{proof}

From Theorem \ref{bip:Reduc_IS}, we immediately deduce that \textsc{Weighted Upper Edge Cover} in bipartite graphs is not in \textbf{APX} unless \textbf{P}$=$\textbf{NP}.
However, using several results  \cite{GJ79,AlimontiK00} concerning the \textbf{APX}-completeness of \textsc{MaxIS} in connected graph $G$ with constant maximum degree $\Delta(G)\geq 3$ or \textbf{NP}-completeness of \textsc{MaxIS} in planar graphs, we obtain:

\begin{corollary}\label{Gen_NPc:BipReduc_IS}
\textsc{Weighted Upper Edge Cover} in bipartite (resp., planar  bipartite) graphs  of maximum degree $\Delta$ for any fixed $\Delta\geq 4$ and binary weights is \textbf{APX}-complete (resp. \textbf{NP}-complete).
\end{corollary}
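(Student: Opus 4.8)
The plan is to re-run the reduction $G\mapsto H$ of Theorem~\ref{bip:Reduc_IS} on suitably restricted families of \textsc{MaxIS} instances and to track two structural parameters of $H$, its maximum degree and its planarity, since both change in a controlled way; the approximation-preserving quality of the reduction then transports the hardness of \textsc{MaxIS} to \textsc{Weighted Upper Edge Cover}.

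First I would carry out the degree bookkeeping. In $H$ the only vertices that can have large degree are the original vertices $v_i$: such a $v_i$ is adjacent to $v_{i,1}$ (the first edge of its private $P_3$) and to exactly one subdivision vertex $v_{ij}$ per edge $v_iv_j$ incident to $v_i$ in $G$, so $d_H(v_i)=d_G(v_i)+1$. Every other vertex of $H$ (the subdivision vertices $v_{ij}$ and the path vertices $v_{i,1},v_{i,2}$) has degree at most $2$. Hence $\Delta(H)=\Delta(G)+1$, and to obtain target maximum degree $\Delta$ it suffices to feed the reduction with graphs $G$ of maximum degree $\Delta-1$. For every fixed $\Delta\ge 4$ we have $\Delta-1\ge 3$, and \textsc{MaxIS} is \textbf{APX}-hard on graphs of maximum degree $d$ for every fixed $d\ge 3$~\cite{AlimontiK00,GJ79}. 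Since the correspondence of Theorem~\ref{bip:Reduc_IS} is weight preserving in both directions (equation~(\ref{eq3:BipReduc_IS}) gives $\alpha(G)=\uec(H,w)$, and any minimal edge cover $S'$ of $H$ yields in polynomial time an independent set of $G$ of size at least $w(S')$), it is a strict reduction, in particular an \textbf{A}-reduction, and thus transfers \textbf{APX}-hardness verbatim to \textsc{Weighted Upper Edge Cover} on bipartite graphs of maximum degree $\Delta$ with binary weights. Membership in \textbf{APX} does not follow from the reduction; it is supplied directly by the constant-factor approximation algorithm for bounded-degree graphs developed in Section~\ref{sec:apx-MaxDegreeDelta}. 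Combining the two directions gives \textbf{APX}-completeness.

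For the planar statement I would observe that both gadgets of the construction, subdividing each edge of $G$ once and attaching a pendant $P_3$ to each vertex, preserve planarity, so if $G$ is planar then $H$ is planar (and, as already noted in Theorem~\ref{bip:Reduc_IS}, bipartite). Starting from the \textbf{NP}-complete restriction of \textsc{Independent Set} to planar graphs of maximum degree $3$~\cite{GJ79}, the same degree computation yields a planar bipartite $H$ of maximum degree $4$, and equation~(\ref{eq3:BipReduc_IS}) turns the decision question ``$\alpha(G)\ge k$'' into ``$\uec(H,w)\ge k$''; as \textsc{Weighted Upper Edge Cover} is trivially in \textbf{NP} (a minimal edge cover is verified and weighed in polynomial time), this gives \textbf{NP}-completeness. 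The only genuinely non-routine ingredient is the \textbf{APX}-membership half of the first statement, which is exactly why it relies on the dedicated algorithm of Section~\ref{sec:apx-MaxDegreeDelta}; everything else is the degree and planarity bookkeeping above together with the value-preserving property already established for the reduction.
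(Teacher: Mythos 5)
Your proposal is correct and follows essentially the same route as the paper: revisit the reduction of Theorem~\ref{bip:Reduc_IS} on degree-bounded (resp.\ planar) \textsc{MaxIS} instances and observe that $\Delta(H)=\Delta(G)+1$ and that planarity is preserved. You are in fact more careful than the paper's one-line argument, since you also supply the \textbf{APX}-membership half via the bounded-degree algorithm of Section~\ref{sec:apx-MaxDegreeDelta} and the \textbf{NP}-membership half for the planar case, which the paper leaves implicit.
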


\begin{proof}
Let us revisit the construction given in Theorem~\ref{bip:Reduc_IS}.
If the instance of \textsc{MaxIS} has maximum degree 3 (resp. is planar with maximum degree 3), then the constructed instance of \textsc{Weighted Upper Edge Cover} is a bipartite (resp., planar bipartite) graph of maximum degree 4.
\qed
\end{proof}

\noindent Using the strong inapproximation result for \textsc{MaxIS} given in \cite{Zuckerman07}, and because the reduction given in previous theorem is indeed a gap-reduction, we also deduce:

\begin{corollary}\label{Gen_inapproxBip:Reduc_IS}
For any $\varepsilon>0$, \textsc{Weighted Upper Edge Cover} in bipartite graphs of $n$ vertices is not $O(n^{\frac{1}{2}-\varepsilon})$ approximable unless \textbf{P} = \textbf{NP}, 
even for binary weights and cycle inequality.
\end{corollary}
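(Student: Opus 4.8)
The plan is to combine the value-preserving reduction of Theorem~\ref{bip:Reduc_IS} with the optimal inapproximability of \textsc{MaxIS} due to Zuckerman, keeping careful track of the quadratic blow-up in the number of vertices. Zuckerman~\cite{Zuckerman07} shows that, unless \textbf{P}$=$\textbf{NP}, \textsc{MaxIS} on an $n$-vertex graph $G$ admits no polynomial-time approximation within $n^{1-\delta}$ for any fixed $\delta>0$. The corollary is only meaningful for $0<\varepsilon<\tfrac12$ (for $\varepsilon\ge\tfrac12$ the claimed ratio is at most a constant, so the statement is either vacuous or already follows from the fact, noted right after Theorem~\ref{bip:Reduc_IS}, that the problem is not in \textbf{APX}), so I fix such an $\varepsilon$ and aim for a contradiction from a hypothetical $O(N^{1/2-\varepsilon})$-approximation, where $N=|V_H|$ is the number of vertices of the reduced instance $H$.

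First I would isolate the two features of the construction $G\mapsto H$ from Theorem~\ref{bip:Reduc_IS} that make it a \emph{gap-reduction}: the optimum is preserved exactly, $\alpha(G)=\uec(H,w)$ by equation~(\ref{eq3:BipReduc_IS}); and the backward map is effective, in the sense that from \emph{any} minimal edge cover $S'$ of $H$ one extracts in polynomial time an independent set $S=\{v_i\colon v_iv_{i,1}\in S'\}$ of $G$ with $|S|\ge w(S')$. Consequently a feasible solution of \textsc{Weighted Upper Edge Cover} on $H$ of weight $w(S')$ yields a feasible solution of \textsc{MaxIS} on $G$ whose value is no smaller, and the two approximation ratios transfer directly.

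The crux is then the size accounting. Here $H$ has $N=|V_H|=3n+m$ vertices, and since $m\le\binom{n}{2}$ we have $N=O(n^2)$, hence $N^{1/2-\varepsilon}=O(n^{1-2\varepsilon})$. Assuming, for contradiction, that \textsc{Weighted Upper Edge Cover} on bipartite graphs with binary weights and cycle inequality were approximable within $O(N^{1/2-\varepsilon})$, running this algorithm on $H$ and applying the backward map would produce, in polynomial time, an independent set $S$ of $G$ with
$$
\alpha(G)=\uec(H,w)\le c\,N^{1/2-\varepsilon}\,w(S')\le c'\,n^{1-2\varepsilon}\,|S|,
$$
that is, an $O(n^{1-2\varepsilon})$-approximation for \textsc{MaxIS}. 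Taking $\delta=2\varepsilon>0$ contradicts Zuckerman's bound, which establishes the corollary.

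The only genuinely delicate point—and the step I expect to watch most carefully—is precisely this exponent bookkeeping: the hardness exponent $1-\delta$ for \textsc{MaxIS} is halved to $\tfrac12-\tfrac{\delta}{2}$ by the quadratic vertex blow-up $N=O(n^2)$, which is exactly what turns the optimal $n^{1-\varepsilon}$ hardness of \textsc{MaxIS} into the $n^{1/2-\varepsilon}$ hardness claimed here. Everything else is inherited verbatim from Theorem~\ref{bip:Reduc_IS} (bipartiteness, binary weights, cycle inequality, and the exact correspondence of optima), so no new construction is required; one need only observe that the reduction and the backward map are computable in time polynomial in $N$, which is immediate from their definitions.
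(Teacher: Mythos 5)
Your proposal is correct and follows essentially the same route as the paper: both invoke the reduction of Theorem~\ref{bip:Reduc_IS} (with $\alpha(G)=\uec(H,w)$ and the polynomial backward map) and the quadratic size bound $|V_H|=3n+m=O(n^2)$, so that the exponent $1-2\varepsilon$ on $n$ becomes $\tfrac12-\varepsilon$ on $|V_H|$. The only cosmetic difference is that the paper cites Zuckerman's result in its gap form (distinguishing $\alpha(G)>n^{1-\varepsilon}$ from $\alpha(G)<n^{\varepsilon}$) while you use the equivalent approximation-ratio form; the bookkeeping is the same.
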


\begin{proof}
We use the reduction given in Theorem \ref{bip:Reduc_IS} and the inapproximability of \textsc{MaxIS}.
\textsc{MaxIS} is known to be
hard to approximate~\cite{Zuckerman07}.
In particular, it is known that, for all $\varepsilon>0$, it is \textbf{NP}-hard to distinguish for an
$n$-vertex graph $G$ between $\alpha(G)>n^{1-\varepsilon}$ and
$\alpha(G)<n^{\varepsilon}$.

In the construction of $H$ (see Figure \ref{Fig:BipTheoReduc_IS}), we know that $|V_H|=m+3n$ and $|E_H|=2(m+n)$ where $m,n$ are numbers of the
edges and vertices of $G$ respectively.
Hence, we deduce $|V_H|\leq 2n^{2}$, and the claimed result follows.
\qed
\end{proof}

\noindent We also deduce one inapproximability result depending on the maximum degree.

\begin{corollary}\label{Maxdegree:BipReduc_IS}
For any constant $\varepsilon > 0$, unless
\textbf{NP}$\subseteq$\textbf{ZPTIME}$(n^{\mathrm{poly} \log n})$, it is hard to approximate \textsc{Weighted Upper Edge Cover} on bipartite graphs of maximum degree $\Delta$ within a factor of $\Theta\left(\frac{1}{\Delta^{1-\varepsilon}}\right)$.
\end{corollary}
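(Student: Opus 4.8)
The plan is to reuse verbatim the approximation-preserving reduction $G\mapsto H$ built in the proof of Theorem~\ref{bip:Reduc_IS} and to pair it with a degree-parametrized inapproximability result for \textsc{MaxIS}. The crucial feature to exploit is that this reduction is \emph{exact}: by~(\ref{eq3:BipReduc_IS}) we have $\alpha(G)=\uec(H,w)$, and from any minimal edge cover $S'$ of $H$ one recovers in polynomial time an independent set $S$ of $G$ with $|S|\geq w(S')$. Consequently any $\rho$-approximation for \textsc{Weighted Upper Edge Cover} on $H$ yields a $\rho$-approximation for \textsc{MaxIS} on $G$; so it only remains to control how the maximum degree behaves under the reduction and then to quote the right hardness statement for \textsc{MaxIS}.

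First I would carry out the degree bookkeeping. In $H$ the only vertices whose degree can exceed $2$ are the original vertices $v_i\in V$: such a $v_i$ is joined to $v_{i,1}$ by exactly one edge and to one middle vertex $v_{ij}$ for each edge of $G$ incident with $v_i$, whence $d_H(v_i)=d_G(v_i)+1$. Every other vertex ($v_{i,1}$, $v_{i,2}$, and each $v_{ij}$) has degree at most $2$. Therefore $\Delta(H)=\Delta(G)+1$, i.e. the reduction increases the maximum degree by an additive constant only. For any fixed $\varepsilon>0$ and growing $\Delta$ this is harmless for the exponent $1-\varepsilon$, since $(\Delta(G)+1)^{1-\varepsilon}=\Theta(\Delta(G)^{1-\varepsilon})$; thus a $\Theta(1/\Delta^{1-\varepsilon})$-approximation of the cover problem on bipartite graphs of maximum degree $\Delta(H)$ translates, through the exact reduction, into a $\Theta(1/\Delta^{1-\varepsilon})$-approximation of \textsc{MaxIS} on graphs of maximum degree $\Delta(G)$.

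It then suffices to invoke the known hardness of approximating \textsc{MaxIS} as a function of the maximum degree: unless \textbf{NP}$\subseteq$\textbf{ZPTIME}$(n^{\mathrm{poly}\log n})$, \textsc{MaxIS} on graphs of maximum degree $\Delta$ admits no approximation within a factor of $\Theta(1/\Delta^{1-\varepsilon})$. Composing this with the (additively) degree-preserving exact reduction gives the claimed $\Theta(1/\Delta^{1-\varepsilon})$ inapproximability for \textsc{Weighted Upper Edge Cover} on bipartite graphs of maximum degree $\Delta$.

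The routine part is the degree count, which is immediate once one inspects the gadget. The one point that needs care---and the source of the complexity hypothesis in the statement---is selecting a hardness result for \textsc{MaxIS} that is simultaneously \emph{(i)} phrased in terms of the maximum degree $\Delta$ rather than the number of vertices $n$, and \emph{(ii)} strong enough to exclude a $\Theta(1/\Delta^{1-\varepsilon})$-approximation. This is precisely what forces the assumption \textbf{NP}$\subseteq$\textbf{ZPTIME}$(n^{\mathrm{poly}\log n})$ in the conclusion, and is the only genuinely non-bookkeeping ingredient of the argument.
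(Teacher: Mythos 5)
Your proof is correct and follows essentially the same route as the paper's: the same reduction $G\mapsto H$ from Theorem~\ref{bip:Reduc_IS}, the same observation that $\Delta(H)=\Delta(G)+1=\Theta(\Delta(G))$, and the same degree-parametrized hardness of \textsc{MaxIS} (due to Trevisan, as restated by Chalermsook et al.) as the external ingredient. The only cosmetic difference is that the paper phrases the transfer as a gap reduction (re-expressing the Yes/No thresholds in terms of $|V(H)|$, which requires the regularity of $G$ to control $|V(H)|=\Theta(\Delta|V(G)|)$), whereas you transfer the approximation ratio directly via the exactness $\alpha(G)=\uec(H,w)$; both are valid and yield the same conclusion.
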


\begin{proof}
We will prove that it is difficult for a graph $H$ (even bipartite with binary weights) of maximum degree $\Delta$ to  distinguish between the following two cases:
\begin{itemize}
\item[$\bullet$](\textrm{Yes-Instance})$\uec(H,w)\geq \frac{|V(H)|}{\Delta(G)^{1+\varepsilon}}$, \\
\item[$\bullet$](\textrm{No-Instance}) $\uec(H,w)\leq \frac{|V(H)|}{\Delta(G)^{2-\varepsilon}}$.
\end{itemize}

Hence, the result consists of showing that the transformation given in Theorem \ref{bip:Reduc_IS} is a gap reduction.
It is proved that: Let $\tau(n)$ be any function from integers to integers. Assuming that \textbf{NP}$\nsubseteq$\textbf{ZPTIME}$(n^{O(\tau(n))})$, there is no polynomial-time algorithm that can solve the following problem \cite{ChalermsookLN13} (Theorem 5.7, adapted from \cite{Trevisan01}). For any constant $\varepsilon > 0$ and any integer $q$, given a regular graph $G$ of size $q^{O({\tau(n)})}$ such that all vertices have degree $\Delta=2^{O(\tau(n))}$, the goal is to distinguish between the following two cases:

\begin{itemize}
\item[$\bullet$](\textrm{Yes-Instance}) $\alpha(G)\geq \frac{|V(G)|}{\Delta^{\varepsilon}}$, \\
\item[$\bullet$](\textrm{No-Instance}) $\alpha(G)\leq \frac{|V(G)|}{\Delta^{1-\varepsilon}}$.
\end{itemize}

\noindent Note that if $G$ is a $\Delta$-regular graph, then graph $H$ resulting of Theorem \ref{bip:Reduc_IS} is a bipartite graph of maximum degree $\Delta+1=\Theta(\Delta)$. Thus, since $\alpha(G) = \uec(H,w)$ and $|V(H)|=3|V(G)|+|E(G)|=\Theta(\Delta |V(G)|)$, we get the expected result.
\qed
\end{proof}
\vspace{-0.4 cm}
\section{Split graphs}\label{sec: split graphs}

We will now focus on split graphs.
Recall that a graph $G=(L\cup R,E)$ is a split graph if the subgraph induced by $L$ and $R$ is a maximum clique and an independent  set respectively.

\begin{theorem}\label{split:Reduc_IS}
\textsc{Weighted Upper Edge Cover}  in split graphs  with binary weights and cycle inequality is as hard \footnote{The reduction is actually a Strict-reduction and it is a particular A-reduction which preserves constant approximation.} as \textsc{MaxIS} in general graphs.
\end{theorem}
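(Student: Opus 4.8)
The plan is to mirror the bipartite reduction of Theorem~\ref{bip:Reduc_IS}, adapting the gadget so that the core vertices form a clique instead of an independent set, thereby producing a split graph. Starting from an instance $G=(V,E)$ of \textsc{MaxIS} with $V=\{v_1,\dots,v_n\}$, I would build a split graph $H=(L\cup R,E_H,w)$ whose clique side $L$ contains the $n$ vertices representing $V$, together with all mutual edges among them (making $L$ a clique regardless of adjacency in $G$). To encode the adjacency structure of $G$ on top of this clique, for each edge $e=v_iv_j\in E$ I would attach an independent-set vertex $v_{ij}\in R$ adjacent to both $v_i$ and $v_j$, played as a subdivision/middle vertex exactly as before; these edges receive weight $0$. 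Finally, to each clique vertex $v_i$ I would attach a weighted pendant path (a $P_3$-style gadget $v_iv_{i,1}v_{i,2}$ with the leaf vertices in $R$), giving weight $1$ to the edge $v_iv_{i,1}$ incident to the clique and weight $0$ to $v_{i,1}v_{i,2}$. The vertices $v_i$ go into $L$ while all the middle and pendant vertices go into $R$, so $L$ is a clique and $R$ is independent, as required for a split graph.

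The correctness argument should run parallel to the bipartite case. First I would verify the weights are binary, the instance satisfies the cycle inequality (every cycle passes through at least two weight-$1$ edges or has small weight relative to its edges — this needs checking because the clique $L$ introduces many short cycles), and the graph is split. For the forward direction, given a maximum independent set $S^*$ of $G$, I would select the pendant edge $v_iv_{i,1}$ of weight $1$ precisely for $v_i\in S^*$, cover each middle vertex $v_{ij}$ by an edge to a vertex of $V\setminus S^*$ (the complementary vertex cover), always include the forced leaf edges $v_{i,1}v_{i,2}$, and argue the result is a minimal edge cover of weight $\alpha(G)$, giving $\uec(H,w)\geq\alpha(G)$. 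Conversely, from any minimal edge cover $S'$ I would extract $S=\{v_i : v_iv_{i,1}\in S'\}$ and show $S$ is independent in $G$ of size $w(S')$: minimality forces that if a middle vertex $v_{ij}$ is covered toward $v_i$, then the pendant edge at $v_j$ cannot be selected, which translates endpoint-disjointness of the covering choices into independence, yielding $\alpha(G)\geq\uec(H,w)$ and hence $\alpha(G)=\uec(H,w)$.

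The main obstacle I anticipate is twofold. The subtler issue is verifying the cycle inequality and the minimality bookkeeping in the presence of the clique $L$: because $L$ is now complete, there are many more cycles and many more potential edge-cover configurations than in the bipartite construction, so I must confirm that the weight-$0$ clique edges never let a minimal edge cover ``cheat'' by using a clique edge $v_iv_j$ directly (rather than routing through the middle vertex $v_{ij}$) in a way that breaks the correspondence. I expect this is handled because clique edges and middle edges all carry weight $0$, so only the pendant weight-$1$ edges contribute to $w(S')$, and the independence of the selected $v_i$ follows from the minimality analysis exactly as in the bipartite proof. The second point is cosmetic: I would state explicitly that, as in Corollary~\ref{Gen_NPc:BipReduc_IS}, this is a strict/A-reduction preserving constant approximation, so the full inapproximability consequences for split graphs transfer directly from those known for \textsc{MaxIS}.
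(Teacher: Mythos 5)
Your overall strategy (reduce from \textsc{MaxIS}, use binary weights so that only ``selection'' edges count, and let minimality of the edge cover force independence) is the same as the paper's, but your concrete gadget does not produce a split graph, and this is a genuine gap rather than a cosmetic one. You place the pendant path $v_iv_{i,1}v_{i,2}$ with ``all the middle and pendant vertices'' in $R$; then the edge $v_{i,1}v_{i,2}$ has both endpoints in $R$, so $R$ is not an independent set and $H$ is not split. The obvious repairs both fail: if you instead put $v_{i,1}$ into the clique $L$, it must become adjacent to every other clique vertex and the gadget has to be redesigned; if you shorten the gadget to a single pendant weight-$1$ edge $v_iv_{i,1}$, then $v_{i,1}$ is a leaf of $H$, so \emph{every} minimal edge cover is forced to contain all $n$ weight-$1$ edges, every feasible solution has weight exactly $n$, and the reduction no longer encodes $\alpha(G)$. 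The whole point of the $P_3$ in the bipartite construction is that the degree-one vertex $v_{i,2}$ forces the weight-$0$ edge $v_{i,1}v_{i,2}$ into the solution, which in turn makes the weight-$1$ edge $v_iv_{i,1}$ \emph{optional} and excludable by minimality exactly when $v_i$ serves as a center for a middle vertex; transplanting this into a split graph requires the internal vertex of that forcing chain to live in the clique.

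The paper's actual construction resolves precisely this tension by using two clique copies $C=\{c_1,\dots,c_n\}$ and $C'=\{c'_1,\dots,c'_n\}$, with all of $C\cup C'$ forming the clique side: the edges inside $C$ and inside $C'$ get weight $0$, the $n^2$ edges between $C$ and $C'$ get weight $1$, each edge $e_k=v_iv_j$ of $G$ is represented by an independent-set vertex $p_k$ joined to $c_i,c_j$ by weight-$0$ edges, and each $c'_i$ receives a single pendant vertex $t_i$ in the independent side via a weight-$0$ edge. The pendant $t_i$ forces $t_ic'_i$ into every minimal edge cover, so every $c'_i$ is already covered; consequently a weight-$1$ edge $c_ic'_j$ can appear in a minimal edge cover only if it is needed to cover $c_i$, which by minimality happens exactly when $c_i$ is not used to cover any $p_k$ --- and this is what translates into independence of the selected vertices. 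So the role you assigned to $v_{i,1}$ is played by $c'_i$, which sits \emph{inside} the clique, and the weight-$1$ selection edges run inside the clique rather than out to the independent side. If you rework your gadget along these lines (and recheck the cycle inequality on the triangles of the enlarged clique, which hold because every triangle contains at most two weight-$1$ edges and $2\cdot 1\leq 1+1+0$), the rest of your forward/backward argument goes through essentially as you sketched it.
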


\begin{proof}
The proof is based on a reduction from \textsc{MaxIS}. Given a graph $G=(V,E)$ of $n$ vertices and $m$ edges where $V=\{v_1,\dots,v_n\}$
and $E=\{e_1,\dots,e_m\}$, instance of \textsc{MaxIS}, we build a split weighted graph $H=(V_H,E_H,w)$ as follows:
\begin{itemize}

\item Put two copies of vertices $V$ in $H$, indicated by $C=\{c_1,\dots,c_n\}$ and $C'=\{c'_1,\dots,c'_n\}$ and make two cliques of size $n$
such that all pairs of vertices in $C$ and $C'$ are connected to each other with edges of weight $0$.

\item Connect all pairs $c_ic'_j$ for $1\leq i,j\leq n$ with edges of weight $1$ to make a clique of size $2n$.

\item Add a set of $m$ new vertices $\{p_1,\dots,p_m\}$ corresponding to edges of $E$ and connect $p_i$ to $c_j,c_k$ with edges of weight $0$ if $e_i=v_jv_k\in E$.

\item Add a set of $n$ new vertices $\{t_1,\dots,t_n\}$ and connect each $t_i$ to $c'_i$ with edges of weight $0$.
\end{itemize}

It is easy to check $H$ is a weighted split graph with binary weights and cycle inequality which contains a clique of size $2n$ and an
independent set of size $n+m$. Figure \ref{Fig:split_Reduc_IS} gives an illustration of the construction of $H$ from a $P_3$.

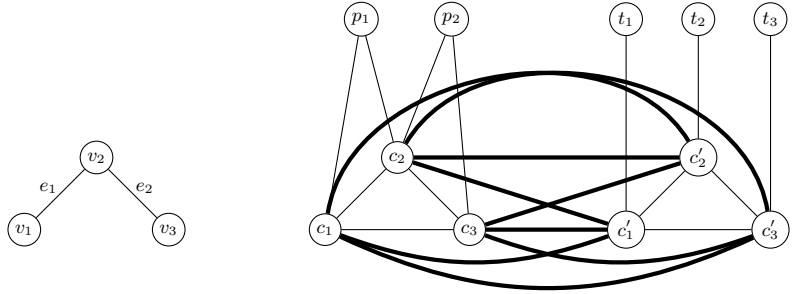
\begin{figure}
\centering
\begin{tikzpicture}[scale=0.8, transform shape]
\tikzstyle{vertex}=[circle, draw, inner sep=2pt,  minimum width=1 pt, minimum size=0.1cm]
\tikzstyle{vertex1}=[circle, draw, inner sep=2pt, fill=black!100, minimum width=1pt, minimum size=0.1cm]

\node[vertex] (v1) at (-1,0) {$v_1$};
\node[vertex] (v2) at (0.2,1.2) {$v_2$};
\node[vertex] (v3) at (1.4,0) {$v_3$};

\draw (v1)--(v2);
\draw (v2)--(v3);
\node () at (-0.6,0.7) {$e_1$};
\node () at (1,0.7) {$e_2$};

\node[vertex] (c1) at (4,0) {$c_1$};
\node[vertex] (c2) at (5.2,1.2) {$c_2$};
\node[vertex] (c3) at (6.4,0) {$c_3$};

\node[vertex] (cp1) at (9,0) {$c'_1$};
\node[vertex] (cp2) at (10.2,1.2) {$c'_2$};
\node[vertex] (cp3) at (11.4,0) {$c'_3$};

\draw (c1)--(c2)--(c3)--(c1);
\draw (cp1)--(cp2)--(cp3)--(cp1);
\draw (c2)[ultra thick]--(cp2);
\draw (c3)[ultra thick]--(cp1);
\draw (c2)[ultra thick]--(cp1);
\draw (c3)[ultra thick]--(cp2);
\draw (c1) edge[ultra thick, bend right=20] (cp1);
\draw (c1) edge[ultra thick, bend right=25] (cp3);
\draw (c3) edge[ultra thick, bend right=20] (cp3);
\draw (c1) edge[ultra thick, bend left=70] (cp2);
\draw (c2) edge[ultra thick, bend left=70] (cp3);

\node[vertex] (p1) at (4.6,3.5) {$p_1$};
\node[vertex] (p2) at (6.1,3.5) {$p_2$};
\draw (c1)--(p1)--(c2)--(p2)--(c3);

\node[vertex] (t1) at (9,3.5) {$t_1$};
\node[vertex] (t2) at (10.2,3.5) {$t_2$};
\node[vertex] (t3) at (11.4,3.5) {$t_3$};

\draw (cp1)--(t1);
\draw (cp2)--(t2);
\draw (cp3)--(t3);
\end{tikzpicture}
\caption{Construction of split graph $H=(V_H,E_H)$ from a $P_3$.
The weights of thick edges in $H$ are 1 and for the others are 0.}\label{Fig:split_Reduc_IS}
\end{figure}

\bigskip
We claim that $G$ has an independent set of size $k$ iff there exists a minimal edge cover of $H$ with total weight $k$.
\bigskip

Let $S$ be an independent set of $G$ with size $|S|$. For each $e_i\in E$, there is a vertex $v_{e_i}\notin S$ which covers $e_i$
since $S$ is an independent set of $G$. Consider the set $\{c_{e_i}:v_{e_i}\notin S \}$ of vertices in $C$ corresponding to vertices of $V\setminus S$,
$S^\prime=\{c_{e_i}p_i:e_i\in E\}\cup \{c'_it_i:v_i\in V\}\cup \{c_ic'_i:v_i\in S\}$ is a minimal edge cover of $H$.
By construction, $w(S^\prime)=|S|$. Hence, we deduce:

\begin{equation}\label{eq1:SplitReduc_IS}
\uec(H,w)\geq \alpha(G)
\end{equation}

Conversely, let  be a minimal edge cover of $H$ with weight $w(S^\prime)$.
Since for $1\leq i\leq n$, $t_i$'s  are leaves in $H$, $\{t_ic'_i:v_i\in V\}$ is a part of $S^\prime$.
Moreover, for each $e_k=v_iv_j\in E$ with $i<j$,  at least one edge among $c_ip_k$ or $c_jp_k$
belongs to $S^\prime$. W.l.o.g., assume that $c_ip_k\in S^\prime$; this means that $c_ic'_j\notin S'$ for all $1\leq j\leq n$.
Furthermore, for each $c_i\in C$ at most one edge $c_ic'_j\in S^\prime$ for $1\leq j\leq n$.
Hence, $S=\{v_i:c_ic'_j\in S'\}$ is an independent set of $G$ with size $|S|=w(S^\prime)$. We deduce,

\begin{equation}\label{eq2:SplitReduc_IS}
\alpha(G)\geq \uec(H,w)
\end{equation}

Using inequalities (\ref{eq1:SplitReduc_IS}) and (\ref{eq2:SplitReduc_IS}) we deduce $\alpha(G)=\uec(H,w)$.
\qed
\end{proof}


\begin{corollary}\label{Gen_inapproxSPLIT:Reduc_IS}
 \textsc{Weighted Upper Edge Cover} in split 3-subregular graphs  is \textbf{APX}-complete and for any $\varepsilon>0$, \textsc{weighted upper edge cover}
in split graphs of $n$ vertices is not $O(n^{\frac{1}{2}-\varepsilon})$ approximable unless \textbf{P} = \textbf{NP}.
\end{corollary}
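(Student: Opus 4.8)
The plan is to obtain both statements from the strict reduction of Theorem~\ref{split:Reduc_IS}, mirroring the bipartite treatment in Corollaries~\ref{Gen_NPc:BipReduc_IS} and~\ref{Gen_inapproxBip:Reduc_IS}. For the $O(n^{1/2-\varepsilon})$ lower bound on arbitrary split graphs, I would re-use the construction of Theorem~\ref{split:Reduc_IS} verbatim together with the inapproximability of \textsc{MaxIS} from~\cite{Zuckerman07}, namely that it is \textbf{NP}-hard to distinguish $\alpha(G)>n^{1-\varepsilon}$ from $\alpha(G)<n^{\varepsilon}$. Since the built graph $H$ has $|V_H|=3n+m=O(n^{2})$ vertices (as $m\leq\binom{n}{2}$) and satisfies $\uec(H,w)=\alpha(G)$, the gap on $G$ becomes a gap of order $|V_H|^{1/2-\varepsilon}$ on $H$; this is pure bookkeeping of the vertex count and yields the claimed bound.

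For the \textbf{APX}-completeness on split $3$-subregular graphs I would treat hardness and membership separately. For \textbf{APX}-hardness I would run the construction of Theorem~\ref{split:Reduc_IS} starting from an instance $G$ of \textsc{MaxIS} of maximum degree $3$, a family on which \textsc{MaxIS} is \textbf{APX}-complete~\cite{AlimontiK00}. The only thing to check is that $H$ is then split $3$-subregular: each clique vertex $c_i$ has exactly $d_G(v_i)\leq 3$ neighbours $p_j$ in the independent set, each $c'_i$ has the single neighbour $t_i$, while every independent-set vertex $p_i$ (resp.\ $t_i$) has only $2$ (resp.\ $1$) neighbours in the clique. As the reduction is strict and preserves the approximation ratio, the \textbf{APX}-hardness of cubic \textsc{MaxIS} transfers directly.

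The step I expect to be the real obstacle is \textbf{APX}-membership, i.e.\ a constant-factor approximation for this class; this is exactly what must fail for general split graphs by the first statement, and where the unbounded gap between \textsc{MaxWSSF} and \textsc{Weighted Upper Edge Cover} exhibited in Figure~\ref{Fig:exm_wuec,wssf} has to be tamed. The leverage I would use is that $3$-subregularity gives $|E(L,R)|\leq 3|L|$, hence $|R|\leq 3|L|$ and $|L|\geq n/4$, so the clique carries a constant fraction of the vertices. Starting from a $1/2$-approximate spanning star forest of \textsc{MaxWSSF}~\cite{NguyenSHSMZ08} made nice through Property~\ref{niceSSF}, I would rebuild a minimal edge cover by absorbing the isolated set $\mathrm{Triv}$: every $u\in\mathrm{Triv}\cap R$ is adjacent only to clique leaves, each such leaf $v$ sits in an $\ell$-star with $\ell\geq 2$, so detaching $v$ (losing the single edge $w(e_v(\mathcal{S}))$) and making it the centre of a star collecting all of its at most three $\mathrm{Triv}$-neighbours keeps the forest feasible, while an isolated clique vertex is re-attached inside the clique for free. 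The delicate point, and the one I would spend most effort on, is bounding the total weight detached this way by a constant fraction of $w(\mathcal{S})$; when this fails because the weight concentrates on the detached leaf-edges, I would instead output a second solution built directly from those heavy edges using the large clique to cover the remainder, and return the better of the two. Proving that one of these two always retains a constant fraction of $\uec(G,w)$ is what $3$-subregularity should buy us and what is hopeless for general split graphs.
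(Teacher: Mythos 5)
Your treatment of the two hardness claims is correct and is exactly the argument the paper intends: the corollary is stated without an explicit proof, and the analogous bipartite corollaries (Corollaries~\ref{Gen_NPc:BipReduc_IS} and~\ref{Gen_inapproxBip:Reduc_IS}) show that the intended derivation is precisely to re-run the construction of Theorem~\ref{split:Reduc_IS} on degree-$3$ instances of \textsc{MaxIS} for the \textbf{APX}-hardness part, and to combine $\uec(H,w)=\alpha(G)$ with the quadratic size $|V_H|=3n+m=O(n^2)$ and the gap of \cite{Zuckerman07} for the $O(n^{\frac{1}{2}-\varepsilon})$ bound. Your degree bookkeeping for the cross edges of $H$ (at most $3$ for each $c_i$, $1$ for each $c'_i$, $2$ for each $p_j$, $1$ for each $t_i$) is the right check for $3$-subregularity.

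The one place where your proposal has a real gap is \textbf{APX}-membership, and you are right to single it out: a strict reduction \emph{from} \textsc{MaxIS} only transfers hardness, and the paper's sole positive tool for such a step, the $\frac{1}{2\Delta}$-approximation of Corollary~\ref{cor:apx-MaxDegreeDelta}, does not apply because the clique side of a split graph has unbounded degree. Your proposed fix --- absorb $\mathrm{Triv}$ by detaching leaves of a nice $1/2$-approximate star forest, and, when the detached weight is too large, fall back to a ``second solution built from the heavy edges'' --- is not established: the fallback requires extending a prescribed heavy star forest to a minimal edge cover without losing weight, and in a split graph an uncovered vertex of the independent side may be adjacent only to leaves of that forest, which is exactly the obstruction you started from; so neither branch of your case distinction is shown to retain a constant fraction of $\uec(G,w)$. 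To be fair, the paper offers no proof of membership for this class either; what does follow immediately from Theorem~\ref{split:Reduc_IS} is only that the \emph{image} instances of the reduction are constant-approximable (a greedy independent set of $G$, of size at least $n/4\geq \alpha(G)/4$, converts into a minimal edge cover of $H$ of the same weight), which is weaker than membership for all split $3$-subregular graphs. As written, your argument establishes \textbf{APX}-hardness and the $O(n^{\frac{1}{2}-\varepsilon})$ inapproximability, but not \textbf{APX}-completeness.
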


\vspace{-0.4 cm}
\section{$k$-trees}\label{sec: k-trees}

Recall that a $k$-tree is a graph which results from the following inductive definition: A $K_{k+1}$ is a $k$-tree.
If a graph $G$ is a $k$-tree, then the addition of a new vertex which has exactly $k$ neighbors in $G$ such that these $k+1$ vertices induce a $K_{k+1}$ forms a $k$-tree.
As a main result in this section we prove \textsc{Weighted Upper Edge Cover} is \textbf{APX}-complete in $k$-trees even for trivalued weights.

\vspace{-0.4 cm}
\subsection{Negative approximation result}

From Corollary \ref{cor:complete},  we already know that  \textsc{Weighted Upper Edge Cover} is \textbf{NP}-hard to approximate within a ratio strictly better than $\frac{10}{11}$ because the class of all $k$-trees contains the class of complete graphs.
However, this lower bound needs a non-constant number of distinct values \cite{ChakrabartyG10}. Here, we strengthen the result by proving the
existence of lower bounds even for 3 distinct weights. On the other hand, \textsc{Weighted Upper Edge Cover} in weighted {complete graphs} and {$k$-trees} with binary weights is not strictly  approximable within ratio better than $\frac{259}{260}\approx 0.9961$ because it is proved
in \cite[Theorem 3.6]{NguyenSHSMZ08} a lower bound of $\frac{259}{260}+\varepsilon$ for \textsc{MaxSSF}.
Here, we slightly improve this latter bound to  $\frac{179}{190}\approx 0.9421$  of \textsc{Weighted Upper Edge Cover}  with trivalued weights for $k$-trees.

\begin{theorem}\label{Hardktree:theo}
\textsc{Weighted Upper Edge Cover}  is \textbf{APX}-hard in the class of $k$-trees, even for trivalued weights.
\end{theorem}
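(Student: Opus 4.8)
\textbf{Proof plan for Theorem~\ref{Hardktree:theo}.}

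The plan is to build an approximation-preserving reduction from a bounded-degree, APX-hard variant of \textsc{MaxIS}, mirroring the structure of the bipartite and split reductions already given but adapting the gadgets so that the constructed graph is a $k$-tree and uses only three distinct edge weights. The target inapproximability constant $\frac{179}{190}$ indicated in the preamble suggests starting from a specific constant-gap instance of \textsc{MaxSSF} or \textsc{MaxIS} whose optimum value is tightly controlled, so the first step is to fix such a source instance (a bounded-degree graph $G=(V,E)$ on which distinguishing $\alpha(G)\ge s$ from $\alpha(G)\le s'$ is \textbf{NP}-hard for an explicit ratio).

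First I would describe the $k$-tree gadget. The essential difficulty is that a $k$-tree must admit an elimination ordering in which each newly added vertex sees exactly a $k$-clique; a bare $P_3$-attachment as in Theorem~\ref{bip:Reduc_IS} will not preserve this chordal, $k$-tree structure. So the construction must embed each vertex-gadget and each edge-gadget inside overlapping $(k+1)$-cliques, padding with auxiliary vertices joined to a common reference $k$-clique so that the whole graph is assembled by the inductive $k$-tree rule. I would introduce a base $K_{k+1}$ as a ``backbone'' and attach, for each $v_i\in V$, a vertex whose single weight-bearing pendant edge (weight, say, the large value $c$) encodes whether $v_i$ is selected, and for each edge $e=v_iv_j$ an edge-gadget vertex adjacent to the two relevant selection vertices, all completed into $(k+1)$-cliques by adding the backbone clique as neighbours. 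The three weight values $\{a,b,c\}$ are used to: (i) force, via weight-$0$ or low-weight clique edges, that every minimal edge cover picks up a predictable ``base'' contribution; (ii) reward selection of an independent set through the high weight $c$; and (iii) use the middle value to penalise picking both endpoints of an edge, so that high-weight feasible covers correspond exactly to independent sets.

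Next I would prove the two inequalities linking $\uec(H,w)$ and $\alpha(G)$, exactly as in the previous theorems: given an independent set $S$, construct a minimal edge cover by selecting the high-weight pendant edges for $v_i\in S$ and covering the edge-gadget and backbone vertices through the weight-$0$ clique and incidence edges, yielding $w(S')=a|V(H)|_{\text{base}}+c|S|$ or a similar affine expression; conversely, from a minimal edge cover extract an independent set by reading off which high-weight edges are used, using minimality (as in the ``$v_iv_{ij}\notin S'$ implies $v_jv_{j,1}\notin S'$'' argument) to show that no edge of $G$ can have both endpoints selected. The affine relation $\uec(H,w)=\beta+c\,\alpha(G)$ with $|V(H)|=\Theta(n)$ then turns the constant gap for \textsc{MaxIS} into a constant gap for \textsc{Weighted Upper Edge Cover}, establishing \textbf{APX}-hardness; tracking the constants through the specific source instance yields the $\frac{179}{190}$ bound.

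The main obstacle I anticipate is simultaneously satisfying three constraints: keeping the graph a genuine $k$-tree (not merely chordal or partial $k$-tree), using only three weight values, and ensuring the additive base term $\beta$ is a fixed fraction of $\uec$ so that the gap does not wash out. The padding vertices needed to complete each gadget into $(k+1)$-cliques necessarily contribute to every minimal edge cover and thus enlarge $\beta$; the delicate part will be choosing the gadget sizes and the weights $a<b<c$ so that this unavoidable base contribution is an \emph{exactly} computable constant multiple of $n$, allowing the yes/no separation of the \textsc{MaxIS} instance to survive with the claimed explicit ratio.
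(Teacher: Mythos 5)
Your high-level strategy (an approximation-preserving reduction from degree-$3$ \textsc{MaxIS} yielding an affine relation between $\uec$ and $\alpha$, with the degree bound controlling the additive term) is the same as the paper's. But the proposal has two genuine gaps, and the second is exactly the point you defer as ``the delicate part.'' First, a $k$-tree with $k\geq 2$ has minimum degree $k$, so there are no pendant edges; your selection gadget (``a vertex whose single weight-bearing pendant edge encodes whether $v_i$ is selected'') cannot exist, and once you complete each gadget into $(k+1)$-cliques the leaf-forcing mechanism of Theorem~\ref{bip:Reduc_IS} disappears entirely. You never explain what replaces it. Second, your fixed ``backbone $K_{k+1}$'' reads as a construction for constant $k$; but $k$-trees with fixed $k$ have treewidth $k$, where the problem is tractable, so no such construction can establish \textbf{APX}-hardness. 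The reduction must let $k$ grow with the instance.

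The paper resolves both issues with a concrete calibration that your plan lacks. It takes $G'$ to be an $n$-tree: the vertices $V'_c=\{v':v\in V\}$ form a clique whose edges all have weight $n-1$, and each edge $e=uv$ of $G$ is represented by $n-1$ gadget vertices $v_{e,1},\dots,v_{e,n-1}$, each joined to $u'$ and $v'$ by weight-$1$ edges and to enough other vertices (weight $0$) to maintain the $k$-tree elimination ordering. The point of choosing exactly $n-1$ gadget vertices is that the total reward for serving an edge gadget from $u'$ or $v'$ equals one clique edge; this makes the exchange arguments (Property~\ref{Hardktree:theoproperty1}) go through and forces, for every edge $uv$, at least one of $u',v'$ to be a star center, so the leaves inside $V'_c$ form an independent set. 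One then gets the exact identity $\uec(G',w)=(n-1)(m+\alpha(G))$, and the bounds $n\leq 3\alpha(G)$, $m\leq\frac{9}{2}\alpha(G)$ turn a $\rho$-approximation into a $(\frac{11}{2}\rho-\frac{9}{2})$-approximation for \textsc{MaxIS}. Note also that the three weights are $\{0,1,n-1\}$; there is no ``middle penalty weight'' discouraging adjacent selections --- that is enforced structurally by minimality and the exchange, not by the weight function. Without specifying a gadget that achieves this balance, your plan does not yet constitute a proof.
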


\begin{proof}
We give an approximation preserving reduction from independent set problem. It is known that \textsc{MaxIS} is \textbf{APX}-complete in graphs of maximum degree 3 \cite{AlimontiK00}.

Let $G=(V,E)$ be an instance of \textsc{MaxIS} where $G$ is a connected graph of maximum degree $3$ of $n\geq 3$ vertices and $m$ edges.
We build a weighted graph $G'=(V',E',w)$ for \textsc{Weighted Upper Edge Cover} problem where $V'=V'_c\cup V'_E$ and $E'=E'_c\cup (\cup_{e\in E}E'_e)$ as follows:
\begin{itemize}
\item $V'_c=\{v':v\in V\}$ and $V'_E=\cup_{e\in E} V'_e$ where $V'_e=\{v_{e,1},\ldots,v_{e,(n-1)}\}$.
\item The subgraph  $G'[V'_c]=(V'_c,E'_c)$  induced by $V'_c$ is a $K_n$.
\item For each $e=uv\in E$, let us describe the edge set $E'_e$:
\begin{itemize}
\item for every $i=1,\dots,n-1$, vertex $v_{e,i}$ is linked to $u'$ and $v'$.
\item vertex $v_{e,1}$ is linked to the subset $S_{e,1}= V'_c \setminus \{u',v'\}$.
\item for every $i=2,\dots,n-1$, vertex $v_{e,i}$ is linked to $\{v_{e,1},\ldots,v_{e,i-1}\}$ and an arbitrary subset $S_{e,i}\subset S_{e,(i-1)}$ of size $n-i-1$.
\end{itemize}
\end{itemize}

The weight $w(xy)$ for $xy\in E'$ is given by:

$w(xy)=\begin{cases}n-1 & xy\in E'_c,\\1&  xy\in E'_e \text{ with } e=uv\in E \text{ and } x\in\{u',v'\},y\in V'_e,\\ 0 & \text{otherwise}.\end{cases}$

\bigskip

\noindent Note that $|V'|=n+m(n-1)$ and clearly $G'$ can be constructed from $G$ in polynomial time. $G'$ is a $n$-tree because initially all $V'_c\cup \{v_{e,1}\}$ are clique of size $n+1$ for $e\in E$ and at each step the addition of $v_{e,i+1}$ maintains a $K_{n+1}$ containing $v_{e,i+1}$
in the subgraph induced by $V'_c\cup \{v_{e,j}:e\in E,j\leq i\}$. Figure \ref{Fig:k-tree} proposes an illustration of this construction for a $P_3$.

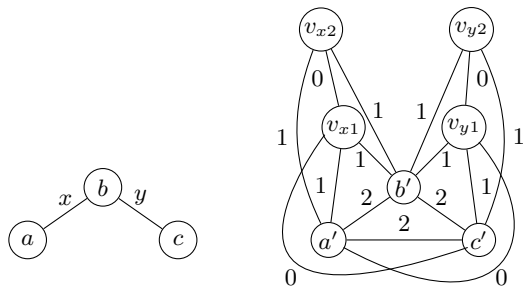
\begin{figure}
\centering
\begin{tikzpicture}[]
\node[vertex1] (a) at (0,0) {$a$};
\node[vertex1] (b) at (1,0.7) {$b$};
\node[vertex1] (c) at (2,0) {$c$};

\draw (a) -- node[midway,above] {$x$} (b);
\draw (b) -- node[midway,above] {$y$} (c);

\node[vertex] (ap) at (4,0) {$a'$};
\node[vertex] (bp) at (5,0.7) {$b'$};
\node[vertex] (cp) at (6,0) {$c'$};

\node[vertex] (v11) at (4.2,1.5) {$v_{x1}$};
\node[vertex] (v12) at (3.9,2.8) {$v_{x2}$};

\node[vertex] (v21) at (5.8,1.5) {$v_{y1}$};
\node[vertex] (v22) at (5.9,2.8) {$v_{y2}$};

\draw (ap) -- node[midway,above] {$2$} (bp);
\draw (bp) -- node[midway,above] {$2$} (cp);
\draw (cp) -- node[midway,above] {$2$} (ap);
\draw (ap) -- node[midway,left] {$1$} (v11);
\draw (bp) -- node[midway,left] {$1$} (v11);
\draw (bp) -- node[midway,right] {$1$} (v21);
\draw (cp) -- node[midway,right] {$1$} (v21);
\draw (v11) -- node[midway,left] {$0$} (v12);
\draw (v21) -- node[midway,right] {$0$} (v22);
\draw (v12) edge[bend right=23] node[midway,left]{$1$}(ap);
\draw (v12) -- node[midway,right] {$1$} (bp);
\draw (v22) edge[bend left=24] node[midway,right]{$1$}(cp);
\draw (v22) -- node[midway,left] {$1$} (bp);
\draw (3.95,1.4) .. controls (3.5,0.8) and (2.3,-1.3) .. (5.85,-0.1);
\node () at (3.5,-0.5) {$0$};
\draw (6,1.3) .. controls (7,0.1) and (6.4,-1.3) .. (4.2,-0.1);
\node () at (6.3,-0.5) {$0$};
\end{tikzpicture}
\caption{The constructed weighted graph $G'=(V',E'.w)$ (right side) build from a $P_3$  $G=(V=\{a,b,c\},E=\{1,2\})$
(left side).}\label{Fig:k-tree}
\end{figure}
\bigskip

\noindent We are going to prove that any $\rho$-approximation for \textsc{Weighted Upper Edge Cover} in $k$-Trees can be polynomially converted into a $(\frac{11}{2} \rho -\frac{9}{2})$-approximation for \textsc{MaxIS} in graphs of maximum degree 3.\\

First, consider an arbitrary independent set $S$ of $G$. From $S$ we build a minimal edge cover $F$ of $G'$ of size at least $(n-1)(|S|+m)$. For each $e=uv\in E$, there is a vertex $f(e)\in ((V\setminus S)\cap \{u,v\})$ because $S$ is an independent set; choose arbitrarily such vertex $r\in V\setminus S$. We set $F=\{f(e)'v_{f(e),i}:e\in E,i\leq n-1\}\cup \{r'v':v\in (V\setminus X)\}$ where $X=\{f(e):e\in E\}\cup \{r\}$.  We deduce $\uec(G',w')\geq w(F)=(n-1)m+(n-1)|S|$ and considering $S$ as a maximum independent set induces:

\begin{equation}\label{Hardktree:theoeq1}
\uec(G',w')\geq (n-1)(m+\alpha(G))
\end{equation}

\bigskip

Conversely, assume that $F$ is a minimal edge cover of $G'$. We will polynomially modify $F$ into another minimal edge cover  $F'$ of better weight.
\begin{property}\label{Hardktree:theoproperty1}
We can assume that $F$ satisfies the following  facts:
\begin{itemize}
\item[$(a)$] for each $e=uv\in E$ at least one of $u'$ or $v'$ is center,
\item[$(b)$] for each $e=uv\in E$, any vertex of $V'_e$ is a leaf and its center is $u'$ or $v'$.
\end{itemize}
\end{property}

\begin{proof}
For $(a)$ Otherwise, we could modify $F$ into $F'$ by repeating the following process for each edge $uv\in E$ where $u'$ and $v'$ are leaves in $F$ to satisfy $(a)$: if none of centers of $u$ and $v$ are in $V'_e$, then $t=u$ else $t$ is one of $u$, $v$ which its center is in $V'_e$. Let $S=\{ab\in F: a\in V'_e\cup \{t\}\}$ and $S'=\{tx: x\in V'_e\}$. Now $F'=(F\setminus S)\cup S'$ remains a  minimal edge cover of $G'$ that $w(F')\geq w(F)$ and $t$ is a center in $F'$. \\

For $(b)$ Let $e=uv\in E$ and w.l.o.g $u$ is a center in $F$. Let $S=\{ab:a\in V'_e\}$ and $S'=\{ux:x\in V'_e\}$. Now $F'=(F\setminus S')\cup S$ is a spanning star forest with possibly trivial stars of $G'$ with $w(F')\geq w(F)$ which satisfies $(b)$. Notice after these stages, we may create of some isolated vertices included in $V'_c$. However, connecting every isolated vertices in $V'_c$ to an arbitrary center in $V'_c$ induces a minimal edge cover with larger weight.
\qed
\end{proof}

Let $I'\subseteq V'_c$ be the leaves of the stars of $F'$. By considering $(a)$ in Property \ref{Hardktree:theoproperty1}, $I=\{v:v'\in I'\}$ is an independent set of $G$. Since for each minimal edge cover  $F$, there exist a minimal edge cover  $F^\prime$ such that:

\begin{equation}\label{Hardktree:theoeq2}
w(F)\leq w(F^\prime)= (m+|I|)(n-1)\leq (m+\alpha(G))(n-1)
\end{equation}
 Hence by considering inequality (\ref{Hardktree:theoeq1}) $\uec(G',w')= (m+\alpha(G))(n-1)$.

Let $F$ be a $\rho$-approximation for \textsc{Weighted Upper Edge Cover} for $(G',w')$ and $I$ be an independent set of $G$ which made by $F'$ then:

\begin{equation} \label{Hardktree:theoeq3}
\rho\leq \frac{w(F)}{\uec(G',w')}\leq \frac{w(F')}{\uec(G',w')}=\frac{(n-1)(m+|I|)}{(n-1)(m+\alpha(G))}=\frac{m+|I|}{m+\alpha(G)}
\end{equation}

since $G$ is connected of maximum degree 3, we know $n\leq 3 \alpha(G)$ (using Brook's Theorem and $n\geq 5$),
and then $m\leq  \frac{9}{2}  \alpha(G)$. Using this:

\begin{equation*}
\Rightarrow 1-\rho \geq \frac{\alpha(G)-|I|}{m+\alpha(G)}\geq \frac{\alpha(G)-|I|}{11/2\alpha(G)}
\end{equation*}

\begin{equation*}
\Rightarrow \frac{11}{2} \rho -\frac{9}{2}  \leq \frac{|I|}{\alpha(G)}
\end{equation*}

or equivalently $\frac{|I|}{\alpha(G)}\geq \frac{11}{2} \cdot\frac{w(F)}{\uec(G',w')}-\frac{9}{2}$.
Hence, \textsc{Weighted Upper Edge Cover}  is \textbf{APX}-hard in the class of  weighted $k$-Trees with trivalued weights.
\end{proof}
\qed

\begin{corollary}\label{coro:neqktree}
\textsc{Weighted Upper Edge Cover}  is not approximable within $\frac{179}{190}+\varepsilon$  for every $\varepsilon>0$ unless \textbf{P}$=$\textbf{NP} in the class of  weighted $k$-trees, even if there are only three distinct weights.
\end{corollary}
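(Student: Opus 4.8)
The plan is to combine the reduction of Theorem~\ref{Hardktree:theo} with a sharp explicit inapproximability bound for \textsc{MaxIS} on bounded-degree graphs. Theorem~\ref{Hardktree:theo} already converts any $\rho$-approximation for \textsc{Weighted Upper Edge Cover} on $k$-trees with trivalued weights into a $\left(\frac{11}{2}\rho-\frac{9}{2}\right)$-approximation for \textsc{MaxIS} on connected graphs of maximum degree $3$, through the inequality $\frac{|I|}{\alpha(G)}\geq \frac{11}{2}\cdot\frac{w(F)}{\uec(G',w')}-\frac{9}{2}$ derived there. Hence the entire content of the corollary is to feed a concrete hardness threshold for cubic \textsc{MaxIS} into this transfer and read off the corresponding threshold for \textsc{Weighted Upper Edge Cover}; no new gadget is needed.

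First I would invoke the explicit inapproximability of \textsc{MaxIS} in cubic (hence maximum degree $3$) connected graphs due to Chleb\'ik and Chleb\'ikov\'a~\cite{ChlebikC06}, namely that it is \textbf{NP}-hard to approximate within $\frac{94}{95}+\varepsilon$, and that this already holds on \emph{connected cubic} instances. This matters because the instances $G$ supplied by that result must meet the hypotheses of Theorem~\ref{Hardktree:theo} (connected, $n\geq 3$, maximum degree $3$): once they do, the construction of $G'$ applies verbatim, the image is an $n$-tree, and only the three weights $\{0,1,n-1\}$ occur, so the trivalued restriction is preserved for free.

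Next I would argue by contraposition. Assuming a $\left(\frac{179}{190}+\varepsilon\right)$-approximation for \textsc{Weighted Upper Edge Cover} on $k$-trees, I would substitute its ratio into the transfer of Theorem~\ref{Hardktree:theo} and use the bounds $n\leq 3\alpha(G)$ and $m\leq \frac{9}{2}\alpha(G)$ (Brooks' theorem, exactly as in the proof of that theorem) to control the additive $m$-terms inside $\uec(G',w')=(n-1)(m+\alpha(G))$. The goal is a single chain of inequalities producing a \textsc{MaxIS} approximation on the cubic instance whose ratio beats $\frac{94}{95}$, contradicting~\cite{ChlebikC06}. Since the reduction is gap-preserving and carries \textbf{NP}-hard gap instances to \textbf{NP}-hard gap instances, the resulting inapproximability of \textsc{Weighted Upper Edge Cover} is unconditional under \textbf{P}$\neq$\textbf{NP}.

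The step I expect to be the main obstacle is the exact bookkeeping of this transfer, because the additive $m$-term dampens the gap and one must track precisely how the factor $\frac{11}{2}$ acts on the \textsc{MaxIS} loss $1-\frac{94}{95}=\frac{1}{95}$ when it is pushed to the \textsc{Weighted Upper Edge Cover} loss $1-\frac{179}{190}=\frac{11}{190}$. The identity that pins the threshold is $\frac{11}{190}=\frac{11}{2}\cdot\frac{1}{95}$, and verifying that the reduction indeed realizes this relation between the two losses — rather than a weaker one — is exactly what distinguishes the target bound $\frac{179}{190}$ from a looser value. Everything else (membership in the class of $k$-trees, connectivity, and the trivalued weighting) is inherited directly from Theorem~\ref{Hardktree:theo} and requires no extra work.
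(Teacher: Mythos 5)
Your proposal follows exactly the same route as the paper's own one-line proof: feed the $\frac{94}{95}+\varepsilon$ hardness of \textsc{MaxIS} on connected cubic graphs from \cite{ChlebikC06} into the transfer inequality $\frac{|I|}{\alpha(G)}\geq \frac{11}{2}\cdot\frac{w(F)}{\uec(G',w')}-\frac{9}{2}$ of Theorem~\ref{Hardktree:theo}. However, the step you single out as ``the main obstacle'' --- verifying that the two losses are related by $1-\rho=\frac{11}{2}\bigl(1-\rho'\bigr)$, i.e.\ that $\frac{11}{190}=\frac{11}{2}\cdot\frac{1}{95}$ is the identity realized by the reduction --- is precisely where the argument breaks, and you do not carry it out. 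The transfer inequality rewrites as $1-\frac{|I|}{\alpha(G)}\leq \frac{11}{2}(1-\rho)$: it is the \textsc{MaxIS} loss that is bounded by $\frac{11}{2}$ times the \textsc{Weighted Upper Edge Cover} loss, not the other way around. Since the amplification factor $\frac{11}{2}$ is larger than $1$, a \textsc{Weighted Upper Edge Cover} ratio of $\rho=\frac{179}{190}$ only guarantees a \textsc{MaxIS} ratio of $\frac{11}{2}\cdot\frac{179}{190}-\frac{9}{2}=\frac{259}{380}\approx 0.68$, which is far below the $\frac{94}{95}$ threshold and yields no contradiction. To contradict \cite{ChlebikC06} one needs $\frac{11}{2}\rho-\frac{9}{2}>\frac{94}{95}$, i.e.\ $\rho>\frac{2}{11}\left(\frac{94}{95}+\frac{9}{2}\right)=\frac{1043}{1045}\approx 0.9981$; so the reduction as given only rules out approximation within $\frac{1043}{1045}+\varepsilon$, a strictly weaker bound than the stated $\frac{179}{190}$ (and weaker even than the $\frac{259}{260}$ bound quoted earlier in the section).

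You should be aware that the paper's own proof of Corollary~\ref{coro:neqktree} performs the same substitution ``$\rho'=\frac{94}{95}$ into $\rho'\geq\frac{11}{2}\rho-\frac{9}{2}$'' and reports $\rho=\frac{179}{190}$, which is not what that equation gives; the constant $\frac{179}{190}=1-\frac{11}{2}\cdot\frac{1}{95}$ would only follow from a reduction in which the \emph{upper-edge-cover} loss is at most $\frac{11}{2}$ times the \emph{independent-set} loss, i.e.\ a gap amplification in the opposite direction. So the gap in your write-up is not a mere omission of routine bookkeeping: the bookkeeping, once done, shows that the claimed threshold $\frac{179}{190}$ cannot be extracted from Theorem~\ref{Hardktree:theo} combined with the $\frac{94}{95}$ bound, and either a sharper reduction or a weaker constant is required.
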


\begin{proof}
In \cite{ChlebikC06} it is proved  \textsc{MaxIS} is not $\frac{94}{95}+\varepsilon$ in graphs of maximum degree 3,
even in cubic connected graphs. Using $\rho'=\frac{94}{95}$ and  $\rho'\geq \frac{11}{2} \rho -\frac{9}{2}$ given in Theorem \ref{Hardktree:theo} produces a lower bound $\rho=\frac{179}{190}$.
\qed
\end{proof}

\vspace{-0.4 cm}
\subsection{Positive approximation result}

Now, we propose positive approximation result of \textsc{Weighted Upper Edge Cover} via the use of an approximation preserving reduction
from \textsc{MaxWSSF} which polynomially transform any $\rho$-approximation into a
$\frac{k-1}{k+1}\rho$-approximation for  \textsc{weighted upper edge cover}.

\begin{theorem}\label{theo:apx-k-trees}
In $k$-trees, \textsc{Weighted Upper Edge Cover} is $\frac{k-1}{2(k+1)}$-approximable.
\end{theorem}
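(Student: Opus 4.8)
The plan is to use the reduction announced just above the theorem: the preceding sentence promises that any $\rho$-approximation for \textsc{MaxWSSF} can be turned into a $\frac{k-1}{k+1}\rho$-approximation for \textsc{Weighted Upper Edge Cover} on $k$-trees, so it suffices to realise that reduction and then plug in the known $\frac12$-approximation for \textsc{MaxWSSF} in general graphs~\cite{NguyenSHSMZ08} (which applies since $k$-trees are general graphs). Concretely, I would first run that $\frac12$-approximation on the input $k$-tree $(G,w)$ to obtain a spanning star forest $\mathcal{S}$ with $w(\mathcal{S})\geq \frac12\, opt_{MaxWSSF}(G,w)$. Because every minimal edge cover is a spanning star forest without trivial stars, we have $opt_{MaxWSSF}(G,w)\geq \uec(G,w)$, so it is enough to convert $\mathcal{S}$ into a minimal edge cover $F$ with $w(F)\geq \frac{k-1}{k+1}\,w(\mathcal{S})$: this immediately gives $w(F)\geq \frac{k-1}{2(k+1)}\,\uec(G,w)$, the claimed ratio.

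For the conversion I would first invoke Property~\ref{niceSSF} to assume $\mathcal{S}$ is nice, so that $\mathrm{Triv}$ is independent, every isolated vertex is adjacent only to leaves of $\ell$-stars with $\ell\geq 2$, and $w(uv)\leq w(e_v(\mathcal{S}))$ for every $u\in\mathrm{Triv}$ and its chosen leaf $v$. To cover an isolated vertex $u$ I attach it to such a neighbouring leaf $v$ and promote $v$ to the centre of a new star; since the star formerly containing $v$ had at least two leaves, detaching the edge $c'v$ (with $c'$ its former centre) keeps $c'$ covered, and the outcome is again a spanning star forest, now with no trivial star, i.e. a minimal edge cover $F$ computable in polynomial time. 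Only the detached edges $c'v$ are lost (the newly added edges $uv$ only add weight), so $w(F)\geq w(\mathcal{S})-\sum_{v\ \text{promoted}} w(c'v)$.

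The heart of the argument, and the step I expect to be hardest, is to show that the promotions can be chosen so that the sacrificed weight is at most $\frac{2}{k+1}\,w(\mathcal{S})$, which is exactly what yields the ratio $\frac{k-1}{k+1}$. This is where the $k$-tree structure is indispensable: I would use the perfect elimination ordering of $G$, in which every vertex (in particular every $u\in\mathrm{Triv}$) is simplicial with its $k$ neighbours forming a $K_k$, to charge each detached edge against the edges of $\mathcal{S}$ retained inside the corresponding $K_{k+1}$. The delicate points are, first, to select for $\mathrm{Triv}$ a cheap family of leaves that dominates all isolated vertices while promoting as few and as light leaves as possible, and second, to verify that in each such local clique the retained weight dominates the sacrificed weight in the ratio $(k-1):2$; here the clique density of a $k$-tree together with the niceness inequality $w(uv)\leq w(e_v(\mathcal{S}))$ is what forces the bound. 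An alternative I would keep in reserve is an induction on the elimination ordering, removing a simplicial vertex at a time and maintaining the invariant $w(F)\geq \frac{k-1}{k+1}\,w(\mathcal{S})$ on the shrinking $k$-tree.

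Finally I would assemble the pieces: composing the $\frac12$-approximation for \textsc{MaxWSSF} with the weight-controlled conversion gives a polynomial-time algorithm returning a minimal edge cover of weight at least $\frac{k-1}{2(k+1)}\,\uec(G,w)$, and since the conversion only rearranges stars it needs merely that $G$ be isolated-vertex free, matching the scope stated for all results in the paper.
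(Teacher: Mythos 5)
Your overall scaffolding matches the paper's: run the $\frac12$-approximation for \textsc{MaxWSSF}, make the spanning star forest nice via Property~\ref{niceSSF}, reattach each isolated vertex to an adjacent leaf of an $\ell$-star ($\ell\ge 2$) while detaching that leaf's old star edge, and then argue the total detached weight is at most $\frac{2}{k+1}w(\mathcal{S})$. The gap sits exactly at the step you yourself flag as the hardest: you never prove that the sacrificed weight can be kept below $\frac{2}{k+1}w(\mathcal{S})$. The mechanism you propose --- charging each detached edge $c'v$ against the edges of $\mathcal{S}$ ``retained inside the corresponding $K_{k+1}$'' of a perfect elimination ordering --- does not obviously work: the star edges $e_u(\mathcal{S})$ of the other leaves $u$ in that clique need not lie inside the clique (their centers can be arbitrary), the $K_{k+1}$'s of different isolated vertices overlap so the charging can double-count, and the niceness inequality $w(uv)\le w(e_v(\mathcal{S}))$ only compares a detached edge to the single added edge, not to $(k-1)/2$ times as much retained weight. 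As written, the $(k-1):2$ ratio is asserted, not derived, and the fallback ``induction on the elimination ordering'' is not developed either.

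The paper's actual argument replaces this local charging by a global averaging over a proper $(k+1)$-coloring. Since a $k$-tree satisfies $\chi(G)=k+1$ and every vertex is adjacent to vertices of all $k$ other colors, one partitions the leaf edges $e_v(\mathcal{S})$ into classes $E_1,\dots,E_{k+1}$ according to the color of the leaf $v$, and observes (Property~\ref{property1:k-tree}) that for \emph{any} fixed pair of colors $i<i'$ every isolated vertex has a neighbor that is a non-trivial-star leaf colored $i$ or $i'$, so the whole rerouting procedure can be carried out deleting only edges from $E_i\cup E_{i'}$. Choosing the pair minimizing $w(E_i)+w(E_{i'})$ and averaging $\sum_i w(E_i)\le w(E(\mathcal{S}))$ over the $k+1$ classes gives the loss bound $\frac{2}{k+1}w(E(\mathcal{S}))$ immediately, hence the ratio $\frac{k-1}{k+1}\cdot\frac12$. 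This coloring-and-averaging step is the missing idea you would need to substitute for the elimination-ordering charging.
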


\begin{proof}
The proof uses an approximation preserving reduction
from \textsc{MaxWSSF} which polynomially transform any $\rho$-approximation into a
$\frac{k-1}{k+1}\rho$-approximation for  \textsc{weighted upper edge cover}. Then,
using the $0.5$-approximation of \textsc{MaxWSSF} given in  \cite{NguyenSHSMZ08}, we will get the expected result. \\

Consider an edge-weighted $k$-tree $(G,w)$ where $G=(V,E)$ and assume $G$ is not complete. Let  $\mathcal{S}=\{S_1,\dots,S_r\}\subseteq E$ be a nice spanning star forest of $(G,w)$ (see Property \ref{niceSSF}) which is a $\rho$-approximation of \textsc{MaxWSSF}, that is:

\begin{equation}\label{eq1:approxk-tree}
w(\mathcal{S})\geq \rho\cdot opt_{MaxWSSF}(G,w)
\end{equation}

Now, we show how to modify $\mathcal{S}$ into a minimal edge cover $S$ without loosing too much.\\

Before, we need to introduce some definitions and notations. A {\em vertex-coloring} $\mathcal{C}=(C_1,\dots,C_q)$ of a graph $G$ is a partition of vertices into independent sets (called {\em colors}). The {\em chromatic number} of $G$, denoted $\chi(G)$, is the minimum number of colors used in a vertex-coloring. If $G$ is a $k$-tree, it is well known that $\chi(G)=k+1$ and such an optimal vertex-coloring can be done in linear time; hence, consider any optimal vertex-coloring $\mathcal{C}=\{C_1,\dots,C_{k+1}\}$ of $G$. Moreover, in $k$-trees
we know that each vertex $u\in C_i$ of color $i$ is adjacent to some vertex $v\in C_j$ of color $j$ for every $j\neq i$. We color the edges of $E(\mathcal{S})$ incident to every isolated vertices of  $\mathrm{Triv}$ using the $k+1$ colors where the color of such edge is given by the same color of its leaf. Formally, let $E^\prime=\{uv\in E\colon v\in \mathrm{Triv}\}\subseteq  E(\mathcal{S})$ be the subset of edges incident to isolated vertices $\mathrm{Triv}$ and let $E_i=\{cv=e_v(\mathcal{S})\in E(\mathcal{S})\colon v\in C_i\setminus \mathrm{Triv}\}$ for every $i\leq k+1$ where $c$ is some center of $\mathcal{S}$. The key property is the following:

\begin{property}\label{property1:k-tree}
for any $i<i'$, by deleting some edges of $E_i\cup E_{i'}$ and by adding edges from  $E^\prime$ we obtain a minimal edge cover.
\end{property}

\begin{proof}
It is valid because each vertex of color $i$ is adjacent to some vertices of every other colors. Formally, fix two indices $1\leq i<i'\leq k+1$. Iteratively apply the following procedure:  consider $v\in \mathrm{Triv}$; there is $u\in V\setminus  \mathrm{Triv}$ such that $u\in C_i\cup C_{i'}$ (say $C_i$) and $vu\in E$. By hypothesis, $u$ is a leaf of some $\ell$-star $S_r$ of $\mathcal{S}$. If at this stage $\ell\geq 2$, then add edge $uv\in E^\prime$ and delete edge $uc\in E_i$ of  color $i$; otherwise $\ell=1$ and we just add edge $uv\in E^\prime$. At the end, we get a minimal edge cover.
\qed
\end{proof}

Now, consider $i_1,i_2$ with $i_1<i_2$ such that $w(E_{i_1}\cup E_{i_2})=\min\{w(E_i\cup E_{i'})\colon 1\leq i<i'\leq k+1\}$. Using Property \ref{property1:k-tree}
we can polynomially find a minimal edge cover $S$ of $(G,w)$. By construction, $\sum_{i=1}^{k+1} w(E_i)\leq w(E(\mathcal{S}))$ and then:

\begin{equation}\label{eq2:approxk-tree}
w(E_{i_1}\cup E_{i_2})\leq \frac{2}{k+1}w(E(\mathcal{S}))
\end{equation}

Hence using inequalities (\ref{eq1:approxk-tree}) and (\ref{eq2:approxk-tree}), we get:
$$w(S')\geq w(E(\mathcal{S}))-w(E_{i_1}\cup E_{i_2})\geq \frac{k-1}{k+1}w(E(\mathcal{S}))\geq\frac{k-1}{k+1} \rho \cdot opt_{MaxWSSF}(G,w)$$
Finally, since $opt_{MaxWSSF}(G,w)\geq \uec(G,w)$ we get the expected result.
\qed
\end{proof}
\vspace{-0.4 cm}
\section{Approximation for bounded degree graphs}\label{sec:apx-MaxDegreeDelta}

In this section, we propose some positive approximation results for graphs of bounded degree in complement to those given in Corollary \ref{Maxdegree:BipReduc_IS}.

\begin{theorem}\label{theo:apx-MaxDegreeDelta}
In general graphs with maximum degree $\Delta$, there is an approximation preserving reduction from  \textsc{Weighted Upper Edge Cover} to  \textsc{MaxExtWSSF} with expansion $c(\rho)=\frac{1}{\Delta}\cdot \rho$.
\end{theorem}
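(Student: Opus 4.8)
The plan is to establish an approximation-preserving reduction from \textsc{Weighted Upper Edge Cover} to \textsc{MaxExtWSSF} on graphs of maximum degree $\Delta$, showing that the loss factor is exactly $\frac{1}{\Delta}$. First I would take an instance $(G,w)$ of \textsc{Weighted Upper Edge Cover} with $G=(V,E)$ of maximum degree $\Delta$. The idea is that a minimal edge cover of $G$ is precisely a spanning star forest without trivial stars (as recalled from \cite{manlove1999algorithmic}), so I want to force the \textsc{MaxExtWSSF} solution to avoid isolated vertices by planting, in advance, a packing of stars $\mathcal{U}$ that already touches every vertex. The natural choice is to first compute a maximum (or maximal) weight matching $M$ of $(G,w)$ and use its edges as the initial packing $\mathcal{U}=\{U_1,\dots,U_r\}$ of single-edge stars. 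Every vertex matched by $M$ is then non-trivial in any extension, and a maximal matching covers enough of the graph that the remaining unmatched vertices can be attached as leaves, guaranteeing the extended spanning star forest is a genuine minimal edge cover.

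The key quantitative step is relating $opt$ of the two problems. On the easy direction, any minimal edge cover $S^*$ of $(G,w)$ contains a matching, and since each star of $S^*$ on $\ell\geq 1$ leaves contains an edge of weight at least $\frac{1}{\ell}$ times the star's total weight — and $\ell\leq \Delta$ — one can extract a matching whose weight is at least $\frac{1}{\Delta}w(S^*)$. Concretely, I would orient each star toward its center and pick, in each star, one representative edge, arguing that the stars partition $V$ so the chosen edges form a matching; charging the whole star weight to its heaviest edge and using that a star has at most $\Delta$ edges yields a matching $M$ with $w(M)\geq \frac{1}{\Delta}\uec(G,w)$. This matching, used as the mandatory packing $\mathcal{U}$, certifies that the optimal \textsc{MaxExtWSSF} value on $(G,w,\mathcal{U})$ is at least $\frac{1}{\Delta}\uec(G,w)$, i.e. $opt_{MaxExtWSSF}\geq\frac{1}{\Delta}\uec(G,w)$. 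The reverse inequality is immediate, since any extended spanning star forest containing $\mathcal{U}$ is itself a spanning star forest without trivial stars, hence a minimal edge cover, so its weight is at most $\uec(G,w)$ and therefore $opt_{MaxExtWSSF}\leq \uec(G,w)$.

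Putting these together, given a $\rho$-approximate solution $\mathcal{S}$ to the \textsc{MaxExtWSSF} instance $(G,w,\mathcal{U})$, it is already a minimal edge cover of $(G,w)$, and its weight satisfies $w(\mathcal{S})\geq \rho\cdot opt_{MaxExtWSSF}(G,w,\mathcal{U})\geq \rho\cdot\frac{1}{\Delta}\uec(G,w)$, which is exactly a $\frac{\rho}{\Delta}$-approximation for \textsc{Weighted Upper Edge Cover}. This gives the claimed expansion $c(\rho)=\frac{1}{\Delta}\cdot\rho$. The construction of $\mathcal{U}$ and the feasibility check are all polynomial, so the reduction is approximation-preserving in the required sense.

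I expect the main obstacle to be the correct choice and analysis of the planted packing $\mathcal{U}$: it must simultaneously be extendable to a trivial-star-free spanning star forest (so that every feasible \textsc{MaxExtWSSF} solution really is a minimal edge cover of $G$) and carry enough weight to guarantee the $\frac{1}{\Delta}$ bound. The delicate point is extracting from an arbitrary optimal minimal edge cover a \emph{matching} of weight at least $\frac{1}{\Delta}$ of the cover weight; here I would use the star decomposition of minimal edge covers together with the degree bound $\ell\leq \Delta$ per star, but care is needed so that the chosen representative edges across distinct stars remain vertex-disjoint — this is automatic because the stars of a spanning star forest are vertex-disjoint, which is the fact that makes the whole argument clean.
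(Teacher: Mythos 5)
There is a genuine gap, and it sits exactly where you flagged your own worry. Your reduction forces a (maximum/maximal) matching $\mathcal{U}$ into the \textsc{MaxExtWSSF} solution and then asserts that ``any extended spanning star forest containing $\mathcal{U}$ is itself a spanning star forest without trivial stars, hence a minimal edge cover.'' That assertion is false: the feasibility constraint of \textsc{ExtWSSF} only requires the solution to \emph{contain} $\mathcal{U}$; vertices left unmatched by $\mathcal{U}$ are perfectly free to remain trivial stars in the returned forest. For instance, take a star with center $c$ and leaves $a,b,d$ with $w(ca)$ huge, plus a pendant vertex $p$ attached to $a$ with small weight. A maximal matching is $\{ca\}$, and the best spanning star forest containing it is $\{ca,cb,cd\}$ with $p$ isolated --- not a minimal edge cover, and of weight far above $\uec(G,w)$ (so your inequality $opt_{\textsc{MaxExtWSSF}}\leq\uec(G,w)$ also fails). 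Repairing a trivial star $t$ means attaching $t$ to a neighbouring leaf $u$, which (whenever $u$ sits in a star with at least two leaves) forces you to delete the edge $e_u(\mathcal{S})$ to keep a star forest; that deleted edge can be arbitrarily heavy, and your writeup never bounds this loss. The matching-extraction argument ($w(M)\geq\frac{1}{\Delta}\uec(G,w)$ via the heaviest edge of each star) is correct but only lower-bounds the \emph{optimum} of the constructed instance; it says nothing about how much weight survives the conversion of the approximate solution back into a minimal edge cover.

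By contrast, the paper's proof places the $\frac{1}{\Delta}$ loss entirely in that conversion step. It forces only the \emph{pendant} edges into $U$ (these belong to every minimal edge cover, giving $opt_{\textsc{ExtWSSF}}(G,w,U)\geq\uec(G,w)$ with no loss, and guaranteeing every remaining trivial vertex has degree at least $2$), takes a nice spanning star forest $\mathcal{S}$ returned by the $\rho$-approximation, and then builds an auxiliary vertex-weighted conflict graph $G'$ on the leaves of $\mathcal{S}$ --- two leaves are adjacent when they are the two best attachment points of a common trivial vertex, and each leaf is weighted by its star edge. A greedy weighted independent set $I$ in $G'$, of weight at least $w(\mathcal{S})/\Delta$ since $\Delta(G')\leq\Delta-1$, identifies which star edges can be kept while every trivial vertex is absorbed, yielding a genuine minimal edge cover of weight at least $\frac{\rho}{\Delta}\uec(G,w)$. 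If you want to salvage your construction, you would need to supply an analogous argument bounding the repair loss; as written, the proof does not establish the claimed expansion.
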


\begin{proof}
Consider an edge-weighted graph $(G,w)$ of maximum degree $\Delta(G)$ bounded by $\Delta$ as an instance of \textsc{Weighted Upper Edge Cover}.
We make an instance $(G,w,U)$ of \textsc{MaxExtWSSF} by putting all pendant edges of $G$ in the forced edge set $U$. Property \ref{niceSSF} also works in this context since $U$ is the set of pendant edges. In particular, we deduce $opt_{ExtWSSF}(G,w,U)\geq \uec(G,w)$ because $U$ belongs to any minimal edge cover. Let $\mathcal{S}=\{S_1,\dots,S_r\}\subseteq E$ be a nice spanning star forest of $(G,w)$ containing $U$ satisfying:

\begin{equation}\label{eq1:apx-MaxDegreeDelta}
w(\mathcal{S})\geq \rho\cdot opt_{ExtWSSF}(G,w,U)\geq \rho\cdot \uec(G,w)
\end{equation}

For each $t\in\mathrm{Triv}$, we choose two edges incident to it with maximum weights $e_1^t=tx_t$ and $e_2^t=ty_t$ in $E\setminus E(\mathcal{S})$ (since by construction $d_G(v)\geq 2$), i.e.,  $w(e_1^t)\geq w(e_2^t)\geq w(tv)$ for all possible $v$; let $W=\sum_{t\in\mathrm{Triv}}\left(w(e_1^t)+w(e_2^t)\right)$ be this global quantity. Also, recall that $V_c$ and $V_l$ are the set of vertices labeled by centers and leaves respectively according to $\mathcal{S}$. We build a new vertex weighted graph $G(\mathcal{S})=G^\prime=(V^\prime,E^\prime,w^\prime)$ with maximum degree $\Delta(G^\prime)\leq \Delta(G)-1$ as follows:

\begin{itemize}
\item[$\bullet$] $V^\prime=V_l$.

\item[$\bullet$] $uv\in E^\prime$ iff there exists $t\in \mathrm{Triv}$ with $tx_t=tu$ and $ty_t=tv$.

\item[$\bullet$] For  $v\in V^\prime$, we set $w^\prime(v)=w\left(e_v(\mathcal{S})\right)$\footnote{We recall $e_v(\mathcal{S})$ is the edge of $\mathcal{S}$ linking leaf $v$ to its center.}.
\end{itemize}

Clearly, $G^\prime$ is a graph with bounded degree $\Delta-1$. We mainly prove that from any independent set $I\subseteq V^\prime$
we can polynomially build an upper edge cover $S_I$ of $G$ satisfying:

\begin{equation}\label{eq2:apx-MaxDegreeDelta}
w(S_I)\geq w^\prime(I)+\left(W-\sum_{t\in\mathrm{Triv}}w(e_1^t)\right)\geq w^\prime(I)
\end{equation}

Let $I\subseteq V^\prime$ be maximal independent set of $G^\prime$. This implies $V^\prime\setminus I$ is a vertex cover of $G^\prime$. By construction
of $G^\prime$, for every $t\in\mathrm{Triv}$, at least one vertex $x_t$ or $y_t$ is not in $I$ (say $x_t$ in the worst case). Recall $e_{x_t}(\mathcal{S})$
is the edge of spanning star forest incident to $x_t$ (since $x_t\in V_l$). We will  iteratively apply the following procedure for all
$t\in\mathrm{Triv}$ to build $S_I$:

if the current $\ell$-star $S_r$ of $\mathcal{S}$ containing $e_{x_t}(\mathcal{S})$ satisfies $\ell\geq 2$ (it is true initially by hypothesis),
then delete edge $e_{x_t}(\mathcal{S})$ from $\mathcal{S}$, add edge $e_1^t$ and update spanning star forest $\mathcal{S}$. Otherwise, $\ell=1$
and only add $e_1^t$. At the end of the procedure, we get a minimal edge cover $S_I$ of $G$ satisfying inequality (\ref{eq2:apx-MaxDegreeDelta}).\\

Now, apply as solution of $I$ the greedy algorithm of \textsc{MaxIS} for $G^\prime$ taking, at each step, one vertex with maximum weight $w^\prime$ and by removing all the remaining neighbors of it. It is well known that we have:

\begin{equation}\label{eq3:apx-MaxDegreeDelta}
w^\prime(I)\geq \frac{w^\prime(V^\prime)}{\Delta(G^\prime)+1}\geq \frac{w(\mathcal{S})}{\Delta(G)}
\end{equation}

Hence, using inequalities (\ref{eq1:apx-MaxDegreeDelta}), (\ref{eq2:apx-MaxDegreeDelta})  and  (\ref{eq3:apx-MaxDegreeDelta}),
we get the expected result.

\end{proof}

Using the $0.5$-approximation of \textsc{MaxExtWSSF} given in \cite{KhoshkhahGMT17}, we deduce:

\begin{corollary}\label{cor:apx-MaxDegreeDelta}
\textsc{Weighted Upper Edge Cover} is $\frac{1}{2\Delta}$-approximable in graphs with bounded degree $\Delta$.
\end{corollary}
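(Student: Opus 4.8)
The plan is to obtain this bound as an immediate consequence of Theorem~\ref{theo:apx-MaxDegreeDelta}, by composing the approximation-preserving reduction established there with the best known approximation algorithm for \textsc{MaxExtWSSF}. Recall that Theorem~\ref{theo:apx-MaxDegreeDelta} takes an instance of \textsc{Weighted Upper Edge Cover} on a graph $G$ of maximum degree $\Delta$ and produces an instance $(G,w,U)$ of \textsc{MaxExtWSSF}, where $U$ is the set of pendant edges, together with a polynomial-time procedure converting any $\rho$-approximate spanning star forest into a minimal edge cover of weight at least $\frac{\rho}{\Delta}\,\uec(G,w)$. In the language of the theorem, this reduction has expansion $c(\rho)=\frac{1}{\Delta}\rho$.

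First I would invoke the known polynomial-time $\frac{1}{2}$-approximation for \textsc{MaxExtWSSF} from \cite{KhoshkhahGMT17}, which applies to any instance of the form $(G,w,U)$. Feeding such a $\frac{1}{2}$-approximate solution into the reduction of Theorem~\ref{theo:apx-MaxDegreeDelta} and applying the expansion formula yields a minimal edge cover of weight at least $c\!\left(\frac{1}{2}\right)\uec(G,w)=\frac{1}{2\Delta}\,\uec(G,w)$, which is exactly the claimed ratio. Since both the $\frac{1}{2}$-approximation of \textsc{MaxExtWSSF} and the transformation of Theorem~\ref{theo:apx-MaxDegreeDelta} run in polynomial time, their composition is a polynomial-time $\frac{1}{2\Delta}$-approximation algorithm.

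The only two points that genuinely require attention are that the chaining of the two guarantees is legitimate and that the $\frac{1}{2}$-approximation of \cite{KhoshkhahGMT17} really applies to the particular instance $(G,w,U)$ constructed in Theorem~\ref{theo:apx-MaxDegreeDelta}. The latter holds because $U$ consists precisely of pendant edges, so $(G,w,U)$ is a bona fide \textsc{MaxExtWSSF} instance; moreover each forced pendant edge lies in every minimal edge cover, whence $opt_{ExtWSSF}(G,w,U)\ge \uec(G,w)$, as already observed inside the proof of Theorem~\ref{theo:apx-MaxDegreeDelta}. The former is immediate from the definition of the expansion of an approximation-preserving reduction. I do not expect any real obstacle: the corollary is essentially a plug-in of $\rho=\frac{1}{2}$ into the expansion $c(\rho)=\frac{\rho}{\Delta}$, the substance of the argument having already been carried out in Theorem~\ref{theo:apx-MaxDegreeDelta}.
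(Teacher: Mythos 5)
Your proposal is correct and follows exactly the paper's own route: the corollary is obtained by plugging the $\frac{1}{2}$-approximation of \textsc{MaxExtWSSF} from \cite{KhoshkhahGMT17} into the reduction of Theorem~\ref{theo:apx-MaxDegreeDelta} with expansion $c(\rho)=\frac{\rho}{\Delta}$, yielding the ratio $\frac{1}{2\Delta}$. The additional checks you mention (that $(G,w,U)$ is a valid \textsc{MaxExtWSSF} instance and that $opt_{ExtWSSF}(G,w,U)\geq \uec(G,w)$) are indeed the points already settled inside the proof of the theorem, so nothing further is needed.
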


\vspace{-0.4 cm}
\section{Conclusion}\label{sec:conclution}
In this article we gave positive and negative approximability aspects of \textsc{Weighted Upper Edge Cover} for special classes of graphs. We considered different types of weight function $w$ for edges of input graph.
Hardness of approximation on complete graphs when $w$ satisfies cycle inequality remains open.
Also for graphs with bounded degree $\Delta$, we have shown that our problem is $\frac{1}{2\Delta}$-approximable while we proved it can not be better than $\Theta\left(\frac{1}{\Delta}\right)$.
Finding a tighter approximation algorithm depending on $\Delta$ or on the average degree can be interesting.

\bibliographystyle{abbrv}
\bibliography{wuec}

\end{document}